\newtheorem{theorem}{Theorem}
\newtheorem{lemma}[theorem]{Lemma}
\newtheorem{example}{Example}
\newtheorem{definition}{Definition}
\newtheorem{corollary}[theorem]{Corollary}
\newcommand{\cond}{\,\vert\,}
\newfont{\bbb}{msbm10 scaled 500}
\newfont{\bb}{msbm10 scaled 1100}
\newcommand{\CC}{\mbox{\bb C}}
\newcommand{\RR}{\mbox{\bb R}}
\newcommand{\EE}{\mbox{\bb E}}
\newcommand{\av}{{\bf a}}
\newcommand{\dv}{{\bf d}}
\newcommand{\ev}{{\bf e}}
\newcommand{\hv}{{\bf h}}
\newcommand{\pv}{{\bf p}}
\newcommand{\rv}{{\bf r}}
\newcommand{\xv}{{\bf x}}
\newcommand{\yv}{{\bf y}}
\newcommand{\onev}{{\bf 1}}
\newcommand{\Am}{{\bf A}}
\newcommand{\Id}{{\bf I}}
\newcommand{\Lm}{{\bf L}}
\newcommand{\Qm}{{\bf Q}}
\newcommand{\Rm}{{\bf R}}
\newcommand{\Sm}{{\bf S}}
\newcommand{\Um}{{\bf U}}
\newcommand{\Zm}{{\bf Z}}
\newcommand{\Hc}{{\cal H}}
\newcommand{\Ic}{{\cal I}}
\newcommand{\Jc}{{\cal J}}
\newcommand{\Kc}{{\cal K}}
\newcommand{\Nc}{{\cal N}}
\newcommand{\Qc}{{\cal Q}}
\newcommand{\gammav}{\hbox{\boldmath$\gamma$}}
\newcommand{\deltav}{\hbox{\boldmath$\delta$}}
\newcommand{\lambdav}{\hbox{\boldmath$\lambda$}}
\newcommand{\nuv}{\hbox{\boldmath$\nu$}}
\newcommand{\muv}{\hbox{\boldmath$\mu$}}
\newcommand{\sigmav}{\hbox{\boldmath$\sigma$}}
\renewcommand{\arg}{{\hbox{arg}}}
\DeclareFontFamily{U}{cmfi}{}
\DeclareFontShape{U}{cmfi}{m}{n}{ <-> cmfi10 }{}
\DeclareSymbolFont{CMFI}{U}{cmfi}{m}{n}
\def\argmax{\mathop{\rm argmax}}
\renewcommand{\Am}{\pmb{A}}
\renewcommand{\Lm}{\pmb{L}}
\renewcommand{\Qm}{\pmb{Q}}
\renewcommand{\Rm}{\pmb{R}}
\renewcommand{\Sm}{\pmb{S}}
\renewcommand{\Um}{\pmb{U}}
\renewcommand{\Zm}{\pmb{Z}}
\renewcommand{\av}{\pmb{a}}
\renewcommand{\dv}{\pmb{d}}
\renewcommand{\ev}{\pmb{e}}
\renewcommand{\hv}{\pmb{h}}
\renewcommand{\pv}{\pmb{p}}
\renewcommand{\rv}{\pmb{r}}
\renewcommand{\xv}{\pmb{x}}
\renewcommand{\yv}{\pmb{y}}
\newcommand{\gp}[1]{{#1}}
\newcommand{\ind}[1]{\mathds{1}_{\left\lbrace #1 \right\rbrace}}
\newcommand{\bigO}[1]{\ensuremath{\mathop{}\mathopen{}O\mathopen{}\left(#1\right)}}
\begin{document}
\title{Adaptive Coded Caching for Fair Delivery over Fading Channels}
\author{Apostolos~Destounis,    ~\IEEEmembership{Member,~IEEE,}
	Asma~Ghorbel, 
	Georgios~S.~Paschos,
	~and~Mari~Kobayashi,~\IEEEmembership{Senior Member,~IEEE,}  

	\thanks{A. Destounis is  with the Mathematical and Algorithmic Sciences Lab, France Research Center - Huawei Technologies Co. Ltd., 20 quai de Point du Jour, 92100 Boulogne-Bilancourt, France. Email: apostolos.destounis@hauwei.com}
	\thanks{A. Ghorbel was with the Laboratoire des Signaux et Syst\`emes (L2S), CentraleSup\'elec, Universit\'e Paris-Saclay, 3, Rue Joliot-Curie, 91192 Gif sur Yvette, France. Email: asmaghorbel01@gmail.com}
	\thanks{Georgios S. Paschos was   with the Mathematical and Algorithmic Sciences Lab, France Research Center - Huawei Technologies Co. Ltd., 20 quai de Point du Jour, 92100 Boulogne-Bilancourt, France. Currently, he is with Amazon, Luxembourg. E-mail: gpaschos@gmail.com.}  
	\thanks{M. Kobayashi is with the Laboratoire des Signaux et Syst\`emes (L2S), CentraleSup\'elec, Universit\'e Paris-Saclay, 3, Rue Joliot-Curie, 91192 Gif sur Yvette, France. Email: mari.kobayashi@centralesupelec.fr}
	\thanks{Part of the work in this paper has been presented at the 15th International Symposium on Modeling and Optimization in Mobile, Ad Hoc, and Wireless Networks (WiOpt), Telecom ParisTech, Paris, France, 15th - 19th May, 2017.}
}
\maketitle

\begin{abstract}
The performance of existing \emph{coded caching} schemes is sensitive to the worst channel quality, a problem which is exacerbated when communicating over fading channels.
In this paper, we address this limitation in the following manner:  \emph{in short-term}, we allow transmissions to subsets of users with good channel quality, avoiding users with fades, while \emph{in long-term} we ensure fairness among users. 
Our online scheme combines (i) the classical decentralized coded caching scheme \cite{maddah2013decentralized} with (ii) joint scheduling and power control for the fading broadcast channel, as well as (iii) congestion control for ensuring the optimal long-term average performance. 
We prove that our online delivery scheme maximizes the alpha-fair utility among all schemes restricted to decentralized placement. 
By tuning the value of alpha, the proposed scheme can achieve different operating points on the average delivery rate region and tune performance according to an operator's choice. 

We demonstrate via simulations that our scheme outperforms two baseline schemes: (a) standard coded caching with multicast transmission, limited by the worst channel user yet exploiting the global caching gain;  
(b) opportunistic scheduling with unicast transmissions {exploiting the fading diversity but limited to local caching gain}. 
\end{abstract}
\begin{IEEEkeywords}
Broadcast channel, coded caching, fairness, Lyapunov optimization.
 \end{IEEEkeywords}

\section{Introduction}\label{sec:intro}

A key challenge {for the}  future wireless networks is the increasing video traffic demand, which reached 70\% of total mobile IP traffic in 2015 \cite{cisco15}.
Classical downlink systems cannot meet this demand since they have limited resource blocks, and therefore as the number  $K$ of simultaneous video transfers  increases,  the  per-video throughput  vanishes as $1/K$.
Recently it was shown that \gp{scalable} per-video throughput can be achieved if the communications are synergistically designed with caching at the receivers. 
Indeed, the recent  breakthrough of \emph{coded caching} \cite{maddah2013fundamental} has inspired a  rethinking of  wireless downlink. 
Different video sub-files are cached at the receivers, \gp{and video requests are served by coded multicasts.} 
By careful selection of sub-file caching \gp{and exploitation of the  wireless broadcast channel}, the transmitted signal is simultaneously
useful for decoding at users who requested different video files. 
This scheme has been theoretically proven to scale well, and therefore has the potential to resolve the challenge of downlink bottleneck for future networks. Nevertheless, several limitations hinder its applicability in practical systems \cite{misconceptions}. In this work, we take a closer look to the limitations that arise from the fact that \emph{coded caching was originally designed for a symmetric error-free shared link.} 

\gp{If instead we consider a realistic wireless channel, 
} we observe that coded caching faces a \emph{short-term} limitation.
Namely, its performance is limited by the user in the worst channel condition because the wireless multicast capacity is determined by the worst user \cite[Chapter 7.2]{el2011network}. 
This is in stark contrast with standard downlink techniques such as \emph{opportunistic scheduling} \cite{stolyar,Li05, knopp}, which serve the user with the best instantaneous channel quality. 
Thus, a first challenge is to modify coded caching for exploitation of fading peaks, similar to the opportunistic scheduling. 

In addition to the fast fading consideration, there is also a \emph{long-term} limitation due to a network topology. Namely, the ill-positioned users, e.g. users at the cell edge, may experience long spells of  poor channel quality throughout the video delivery. 
The classical coded caching scheme is designed to provide video files at equal data rates to all users, 
which leads to ill-positioned users consuming most of the air time and hence driving the
overall system performance to low efficiency.
In the literature of wireless scheduling without caches at receivers, this problem has been resolved by the use of fairness among user throughputs \cite{Li05}. 
By allowing poorly located users to receive less throughput than others, precious air time is saved and the overall system performance is greatly increased.
Since the sum throughput rate and equalitarian fairness are typically the two extreme objectives, past works have proposed the use of alpha-fairness \cite{mowalrand} which allows  to select the coefficient $\alpha$ and drive the system to any desirable tradeoff point in between of the two extremes. 
Previously, the alpha-fair objectives have been studied in the context of (i) multiple user activations \cite{stolyar}, (ii) multiple antennas \cite{caire_fairnessMIMO} and (iii) broadcast channels \cite{caire_fairnessBC}. However, in the presence of caches at user terminals, the fairness problem is further complicated by the interplay between user scheduling and designing codewords for multiple users. In particular, we wish to shed light into the following questions: \emph{Which user requests shall we combine together to perform coded caching? How shall we schedule a set of users to achieve our fairness objective while adapting to time-varying channel quality?}

To address these questions, we study the content delivery over a realistic block-fading broadcast channel, where the channel quality varies across users and time. 
Although the decisions of user scheduling and codeword design are inherently coupled, we design a scheme which decouples these two problems, while maintaining optimality through a specifically designed queueing structure. On the transmission side, we select the multicast user set dynamically depending on the instantaneous channel quality and user urgency captured by queue lengths. On the coding side, we adapt the codeword construction of \cite{maddah2013decentralized} to the set of users chosen by the appropriate routing which depends also on the past transmission side decisions.
Combining with an appropriate congestion controller, we show that this approach yields our alpha-fair objective {under the restriction that the entire request sequences for each user are different. This restriction is done for technical reasons, however, and it may not be necessary if the caching scheme proposed in \cite{Yu2018_exactTradeoff} is used}.

More specifically, our approaches and contributions are summarized below:
\begin{itemize}
	\item[1)] We design a novel queueing structure which decouples the channel scheduling from the codeword construction. Although it is clear that the codeword construction needs to be adaptive to channel variation, our scheme ensures this through our \emph{backpressure} that connects the user queues and the codeword queues. Hence, we are able to show that this decomposition is without loss of optimality (see Theorem \ref{th:optimality_infinite}).
	\item[2)] We then provide an online policy consisting of (i) admission control of new files into the system; (ii) combination of files to perform coded caching; (iii) scheduling and power control of codeword transmissions to subset of users on the wireless channel. We prove that 
	the long-term video delivery rate vector achieved by our scheme is a near optimal solution to the alpha-fair optimization problem under the restriction to policies that are based on the decentralized coded caching scheme \cite{maddah2013decentralized}. {That is, the optimality is within the set of policies that are restricted to use caching from \cite{maddah2013decentralized} but are free to schedule transmissions, admit video files, and tune the power in any unrestricted way.}
	\item[3)] Through numerical examples, we demonstrate the superiority of our approach versus (a) standard coded caching with multicast transmission limited by the worst channel condition yet exploiting the global caching gain, 
	(b) opportunistic scheduling with unicast transmissions exploiting only the local caching gain. 
	This shows that our scheme not only is the best among online decentralized coded caching schemes, but moreover manages to exploit opportunistically the time-varying fading channels.
\end{itemize}

\subsection{Related work}

Since coded caching was first introduced in \cite{maddah2013fundamental} and its potential was recognized by the community, 
substantial efforts have been devoted to quantify the gain in realistic scenarios, including decentralized
placement \cite{maddah2013decentralized}, non-uniform popularities \cite{ji2017order,niesen2017coded}, and more general network topologies (e.g. \cite{ji2013fundamental,karamchandani2016hierarchical,shariatpanahi2016multi}). 
A number of recent works have studied coded caching by replacing the original perfect shared link with wireless channels \cite{huang2015performance,zhang2016wireless,zhang2017fundamental,bidokhti2016noisy,NgoAllerton2016,shariatpanahi2017multi}. 
In particular, the harmful effect of coded caching over wireless multicast channels has been highlighted recently \cite{combes2018utility,huang2015performance,zhang2016wireless,NgoAllerton2016}, while similar conclusions and some directions are given in \cite{combes2018utility,huang2015performance,zhang2016wireless,bidokhti2016noisy}. Although \cite{combes2018utility} consider the same
channel model and address a similar question as in the current work, they differ in their objectives and approaches. \cite{combes2018utility} highlights the scheduling part and provides rigorous analysis on the long-term average per-user rate in the regime of large number of users. In the current work, a new queueing structure is proposed to deal jointly with admission control,
routing, as well as scheduling for a finite number of users.

Furthermore, most of existing works have focused on {\it offline} caching where both cache placement and delivery phases are performed once without capturing the random and asynchronous nature of video traffic. 
The works \cite{pedarsani2016online, niesen2015coded} addressed partly the online aspect by studying cache eviction strategies, the delivery delay, respectively. In this work, we will explore a different online aspect. Namely, we assume that 
the file requests from users arrive dynamically and the file delivery is performed continuously over time-varying fading broadcast channels. {For the case of dynamic file requests, authors in \cite{neely13_indexCoding} derived a MaxWeight type of rule over a pre-specified set of coding actions. The focus of that work is, however, on delivering all requested file demands and neither fairness nor time varying channels are considered.}

Finally, online transmission scheduling over wireless channels has been extensively studied in the context of opportunistic scheduling \cite{stolyar} and network utility maximization \cite{neely10}. Prior works emphasize two fundamental aspects: (a) the balancing of user rates according to fairness and efficiency considerations, and (b) the opportunistic exploitation of the time-varying fading channels. 
{There have been some works that study scheduling policies over a queued-fading downlink channel}; \cite{Seong06} gives a maxweight-type of policy and \cite{Eryilmaz01} provides a throughput optimal policy based on a fluid limit analysis.
Our work is the first to our knowledge that studies coded caching in this setting. 
The new element in our study is the joint consideration of user scheduling with codeword construction for the coded caching delivery phase. 

\section{System Model and Problem Formulation}\label{sec:Model}

\subsection{Overview of the System Model}
We consider a time-slotted content delivery network where a server (or a base station) wishes to convey requested files to $K$ user terminals over a wireless channel. Each device has a cache where it can store parts of the files. In this following subsection, we provide an overview of our system model while the details of queuing structure, file combining and codeword generation, and transmissions over the wireless channel are presented in Sections \ref{ssec:queueingStructure}, \ref{ssec:MANscheme}, and \ref{ssec:WirelessModel}, respectively. 

At the beginning of slot $t$, user $k$ makes requests for $A_k(t)$ files, chosen uniformly at random among the $N$ files of the catalog. The files are named as $W_1, \dots, W_N$, and each file is $F$ bits long. The request sequence is i.i.d. in time. However, we begin the paper with the ``infinite reservoir" assumption, which means the number of requested files at each slot is so large that it is not possible to deliver it to the user by the end of the time slot (technically that the mean request vector is outside the feasibility region, please refer to Section III.A for more details). Alternatively, it is equivalent to assuming that there are always enough files that can be sent to a user. This assumption is essentially the same as the standard ``infinite backlog" model for network utility maximization problems (see e.g. \cite[Section III]{neely10}, \cite{Georgiadis06}). We will lift this assumption in Section \ref{sec:dynamic}. 

The wireless channel is modeled by a standard block-fading broadcast channel, such that the channel state remains constant over a slot and changes from one slot to another in an i.i.d. manner. Let $\phi_{\hv}$ the probability that the vector of the channel states of all users is $\hv$. Each slot is assumed to allow for $T_{\rm slot}$ channel uses. The channel output of user $k$ in any channel use of slot $t$ is given by  
\begin{align}\label{eq:BFC}
\yv_k(t)=\sqrt{h_k(t)} \xv_k(t)+\nuv_k(t),
\end{align}
where the channel input $\xv_k\in\CC^{T_{\rm slot}}$ is subject to the power constraint $\EE[\Vert \xv \Vert^2] \leq PT_{\rm slot}$; $\nuv_{k}(t)\sim\Nc_{\CC}(0, \Id_{T_{\rm slot}})$ are additive white Gaussian noises with covariance matrix identity of size $T_{\rm slot}$, assumed independent of each other; $\{h_k(t)\in\CC\}$ are channel fading coefficients independently distributed across time. At each slot $t$, the channel state $\hv(t)=(h_1(t), \dots, h_K(t))$ is perfectly known to the base station while each user knows its own channel realization. 

Each user $k$ is equipped with cache memory $Z_k$ of $MF$ bits, where  $M\in \{0,1,\dots,N\}$. 
We restrict ourselves to decentralized cache placement \cite{maddah2013decentralized}. More precisely, 
each user $k$ independently caches a subset of $\frac{MF}{N}$ bits of file $i$, chosen uniformly at random for $i=1,\dots, N$, under its memory constraint of 
$MF$ bits. For later use, we let $m=\frac{M}{N}$ denote the normalized memory size. 
By letting $W_{i|\Jc}$ denote the sub-file of $W_i$ stored exclusively in the cache memories of the user set $\Jc \subseteq \{1,2,...,K\}$, and $W_{i|\emptyset}$ denoting the sub-file of $W_i$ stored in none of the users (throughout the paper we will also use an alternative notation, denoting files with capital letters and $A_{\Jc}$ the sub-file of file $A$ that is stored exclusively in all users of set $\Jc$. ), the cache memory $Z_k$ of user $k$ after decentralized placement is given by
\begin{align} \label{eq:Zk}
Z_k =\{ W_{i \cond \Jc}:  \;\;  \Jc \subseteq\{1, \dots, K\}, k\in\Jc , \forall i =1,\dots, N \}.
\end{align} 
As in most previous works, we make the assumption of large file size ($F\rightarrow \infty$), therefore the size of each sub-file (measured in bits) can be well approximated by the law of large numbers as   
\begin{align}
|W_{i \cond \Jc}|= m^{|\Jc|}\left(1-m\right)^{K-|\Jc|} F \label{eq:LLN}
,\end{align}
where $|.|$ denotes here the size of the subfile. Note that we will assume that this placement is performed once, before the beginning of the system's operation and stays constant. \footnote{In practice caches would be populated at off-peak hours and stay constant during a large period of time, e.g. populated at night and stay constant during the day} 

Placing parts of a file to each user cache gives the opportunity to exploit the broadcast nature of the wireless medium in order to send useful information to multiple of users. For example, assume users $1,2$ have made requests for files $W_1, W_2$ and $W_3, W_4$, respectively, and these files are at queues waiting to be transmitted. One way to deliver those files would be to make single-user transmissions, and transmit, at each slot, to each user the missing parts of files they requested. Alternatively, we could combine the files for both users by combining, for example, files $1, 3$ and $2,4$. In this case, we can deliver the files by performing single user transmissions for the codewords $\{W_{i|\emptyset}\}_{i=1,2,3,4}$ (the parts of the files no user has stored) and multicast transmissions to both user for the codewords $W_{1 | \{2\}} \oplus W_{3 | \{1\}}$ and $W_{2, | \{2\}} \oplus W_{4 | \{1\}}$, where $\oplus$ denotes the bit-wise XOR operation. The latter scheme needs to transmit fewer information bits over the channel and may be preferable if/when both users have simultaneously good channel conditions. As file requests arrive dynamically and the channel states change over time, we need to continuously decide how to send the backlogged files: Sometimes it may be worth combining files that are requested among a set of users into codewords, that is bit sequences that have to be delivered to appropriate subsets of that set of users. In addition, we should decide when to send, over the wireless channel, bits from codewords created from such combinations (and if yes, send codeword bits corresponding to which subset of users) according to the channel realization; it is beneficial to attempt multicast transmissions to groups of users that have simultaneously a good channel realization.  

The subsection that follows casts this problem in a queueing framework, while subsequent subsections detail how files are combined and how transmissions are performed over the wireless channel. 

\subsection{High Level System Operation and Queueing Structure} \label{ssec:queueingStructure}

The system operates as follows at each time slot $t$: (1) A number of files are admitted for each user and admitted files wait in a queue for each different user to be processed. (2) The transmitter chooses subsets of admitted but unprocessed files to combine. In our context, combining files to a user subset $\Jc \subseteq \{1,2,...,K\}$  means that a "codeword" of bits that have to be delivered via a multicast transmission will be generated for each user subset  $\Ic \subseteq \Jc$. These codewords (the respective bits) are stored in queues, with one queue for each user subset, where queue for subset $\Ic$ is served by a multicast transmission towards all users in set $\Ic$. (3) Transmission of codewords is performed over the Gaussian channel in an opportunistic fashion in order to exploit channel realizations that are favorable for transmitting to user subsets. 

This operation is illustrated in Fig. \ref{fig:queueing_system_real} for $K=3$ users.

\begin{figure}[t]
	\centering
	\includegraphics[scale=0.6]{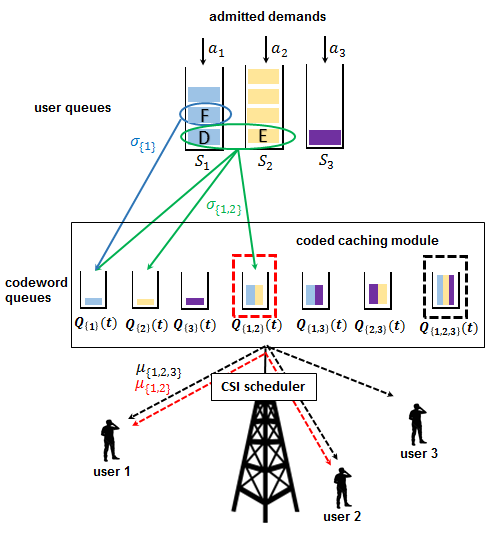}
	\caption{An example of the queueing model for a system with $3$ users. Dashed lines represent wireless transmissions, solid circles files to be combined and solid arrows codewords generated.}
	\label{fig:queueing_system_real}
\end{figure}

At each time slot $t$, the controller \emph{admits} $a_k(t)$ files to be delivered to user $k$, and hence $a_k(t)$ is a control variable. We equip the base station with the following types of queues, as illustrated in Fig. \ref{fig:queueing_system_real}: 

\begin{enumerate}
	\item \textbf{User queues}  to store admitted files, one for each user. The buffer size of queue $k$ is denoted by $S_k(t)$ and expressed in number of files. 
	\item \textbf{Codeword queues} to store codewords to be multicast. There is one codeword queue for each user subset $\Ic \subseteq \{1, \dots, K\}$, thus we have $2^K-1$ such queues. The size of codeword queue $\Ic$ is denoted by $Q_{\Ic}(t)$ and expressed in bits.
\end{enumerate} 

A queueing policy $\pi$ performs the following operations: (i) it decides how many files to admit into  the user queues $S_k(t)$ using $a_k(t)$ variables, (ii) it combines files destined to different users to create multiple codewords. We denote by  $\sigma_\Jc$ the codeword routing variable that defines the number of files combinations among requested files by users in subset $\Jc$. The file combination operates according to the coded caching scheme in \cite{maddah2013decentralized}. We consider $\sigma_\Jc \in \{0,..,\sigma_max\}$ where $\sigma_max$ is he muximum number of files combinations for each user subset.(iii) it decides the encoding function for the wireless transmission. Next, we focus on the queue operations, while the wireless transmissions are explained in \ref{ssec:WirelessModel}

\begin{enumerate}
	\item \textbf{Admission control:}
	At the beginning of each slot, the controller decides how many requests for each user, $a_k(t)$ should be pulled into the system from the infinite reservoir.

	\item \textbf{Codeword Routing:}
	The admitted files for user $k$ are stored in queues $S_k(t)$ for $k=1,\dots, K$. At each slot, files from subsets of these queues are combined into codewords by means of the decentralized coded caching encoding scheme. The files combinations are decided through the routing variable $\sigma_\Jc$ for each user subset $\Jc$. For example: deciding on $\sigma_{1,2}$ to be equal to $1$,  means that we combine one requested file by user $1$ (from user queue $S_1$) with one requested file by user $2$ (from user queue $S_2$) according to the coded caching scheme in \cite{maddah2013fundamental} for $2$ users. This will generate $3$ different subfiles intended to: only user $1$, only user $2$ or simultaneously to users $1$ and $2$. If $\sigma_{1,2}=n>1$, we repeat the previous operation $n$ times for different files.  Depending on the destination, the created subfiles are stored in the corresponding codeword queue (as mentioned in the following operation). \footnote{It is worth noticing that standard coded caching uses $\sigma_{\Jc} = 1$ for $\Jc =\{1,\dots, K\}$ and zero for all the other subsets ($\sigma_{\max}$ is the maximum number of requests that can be combine and is a system parameter). On the other hand, uncoded caching can be represented by $\sigma_{\Jc}=1$ for $\Jc={k}, k\in {1,....,K}$. Our scheme can, therefore, be seen as a combination of both, which explains its better performance.} The size of the user queue $S_k$ evolves as: 
	\begin{align}
	S_k(t+1) = \big[S_k(t) - \underbrace{\sum_{\Jc: k\in \Jc}\sigma_{\Jc}(t)}_{\begin{subarray}{c}\text{number of files} \\ \text{combined into} 
		\\ \text{codewords}\end{subarray}}\big]^+ + \underbrace{a_k(t)}_{\begin{subarray}{c}\text{number of} \\ \text{admitted files}\end{subarray}}  \label{eq:userQueues}
	\end{align}
	If $\sigma_{\Jc}(t)> 0$, the server creates codewords by applying the coded caching framework of \cite{maddah2013decentralized} (see also the next subsection) 
	for this subset of users 
	as a function of the cache contents $\{Z_j: j\in \Jc\}$. 
	
	\item \textbf{Scheduling:}
	The codewords intended to the subset $\Ic$ of users are stored in codeword queue whose size is given by $Q_{\Ic}(t)$ for $\Ic\subseteq \{1,\dots, K\}$. Given the instantaneous channel realization $\hv(t)$ and the queue state $\{Q_{\Ic}(t)\}$, the server performs multicast scheduling and rate allocation. Namely, at slot $t$, it determines the number $\mu_{\Ic}(t)$ of bits per channel use to be transmitted for the users in subset $\Ic$. By letting $b_{\Jc,\Ic}$ denote the number of bits generated for codeword queue $\Ic\subseteq \Jc$ when coded caching
	is performed to the users in $\Jc$, codeword queue $\Ic$ evolves as 
	\begin{align}\label{eq:codewordQueues}
	Q_{\Ic}(t+1) =  \big[Q_{\Ic}(t) -\underbrace{ T_{\rm slot}\mu_{\Ic}(t)}_{\begin{subarray}{c}  \text{number of bits}\\  \text{multicast to $\Ic$} \end{subarray}}\big]^+  + \underbrace{\sum_{\Jc:\Ic\subseteq\Jc}b_{\Jc,\Ic}\sigma_{\Jc}(t)}_{\begin{subarray}{c}\text{number of bits}\\ \text{created  by} \\ \text{combining files}\end{subarray}}
	.\end{align}	
\end{enumerate}  

\noindent A control policy is, therefore, fully specified by giving the rules with which the decisions $\{\av(t), \sigmav(t), \muv(t)\}$ are taken at every slot $t$. We will design a control policy and characterize its performance in Sections \ref{sec:proposed}, \ref{sec:dynamic}, and in the remaining of the current Section we will detail our assumptions for codeword generation and  wireless transmissions.   

\subsection{File Combining and Codeword Generation Scheme}\label{ssec:MANscheme}

In this work, we use the file encoding scheme of Maddah-Ali and Niesen \cite{maddah2013fundamental, maddah2013decentralized} in order to combine files for the selected sets of users. When a set of users $\Jc$ for which combination of unprocessed files is decided to be done, the offline scheme proceeds to generating the corresponding codewords to be multicast. Assuming that the file for user $k\in\Jc$ to be processed this way is $W_k$, the server generates and conveys the following codeword simultaneously useful to each subset  $\Ic \subset \Jc$:  
\begin{align}
V_{\Ic}=\oplus_{k\in \Ic}W_{k|\Ic\setminus\{k\}} 
.\end{align}
We can show that the size of subfile intended to user subset $\Ic$, after combining files requested by users in $\Jc$, is given by 
\begin{equation}\label{eq:b}
b_{\Jc,\Ic}=m^{|\Ic|}(1-m)^{|\Jc|-|\Ic|-1}F, \Ic \subseteq \Jc
.\end{equation}

It is worth noticing that, the {\it coded} delivery with XORs significantly reduces the number of transmissions. 
Compared to uncoded delivery, where the sub-files are sent sequentially and the number of transmissions are equal to $|\Jc|\times|W_{k|\Jc\setminus\{k\}}|$, the coded delivery requires the transmission of $|W_{k|\Jc\setminus\{k\}}|$, yielding a reduction of a factor $|\Jc|$.
In a practical case of $N>K$, it has been proved that decentralized coded caching achieves the total number of transmissions, measured in the number of files, given by \cite{maddah2013decentralized}
\begin{align}\label{eq:mad13}
T_{\rm tot}(K,m) = \frac{1}{m} \left(1-m\right) \left\{1-\left(1-m\right) ^K \right \}. 
\end{align}
On the other hand, in uncoded delivery, the number of transmissions is given by $K(1-m)$ since it exploits only {\it local} caching gain at each user. For a system with $K=30$ users and normalized memory of $m=1/3$, the minimum transmissions required by uncoded delivery is $20$ and that of decentralized coded caching is $2$, yielding a gain of factor $10$.  

In order to further illustrate the placement and codeword generation we consider in this paper,  we provide a three-user example, illustrated in Fig. \ref{fig:ex3}. 
\begin{example}\label{ex:codedCaching}
	Consider that we want to combine the admitted files $A, B, C$ for users 1, 2, 3, respectively. 
	After the placement phase, a given file $A$ will be partitioned into 8 subfiles.
	\begin{figure}
		\vspace{-8pt}
		\begin{center}
			\includegraphics[width=0.4\textwidth,clip=]{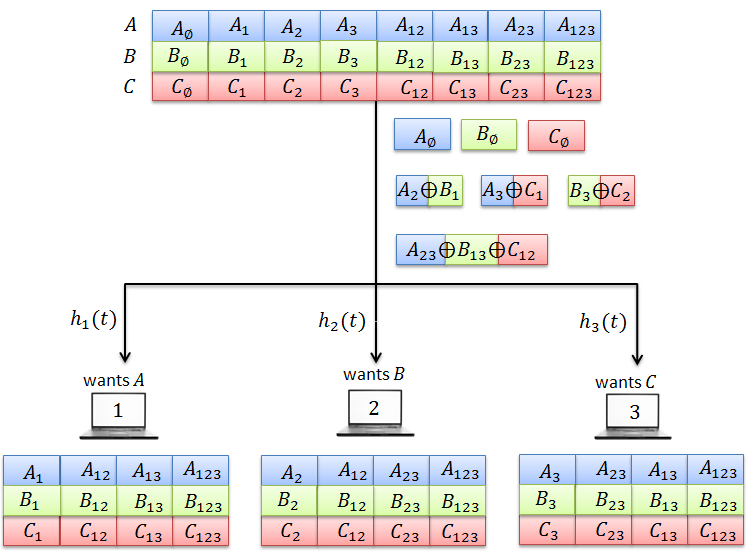}
			\vspace{-2pt}
			\caption{Example of codeword generation for a set of $3$ users.}
			\label{fig:ex3}
		\end{center}
		\vspace{-20pt}
	\end{figure}
	The codewords to be pushed to their respective codeword queues are the following:
	\begin{itemize}
		\item  $A_{ \emptyset}$, $B_{ \emptyset}$ and $C_{ \emptyset}$ to user $1$, $2$ and $3$ respectively.  
		\item  $A_{ 2}\oplus B_{ 1}$ is intended to users $\{1,2\}$. Once received, user $1$ decodes $A_{ 2}$ by combining the received codeword with $B_{ 1}$ given in its cache. Similarly user $2$ decodes $B_{ 1}$. The same approach holds for codeword $B_{ 3}\oplus C_{ 2}$ to users $\{2,3\}$ and codeword $A_{ 3}\oplus C_{ 1}$ to users $\{1,3\}$
		\item  $A_{ 23}\oplus B_{ 13}\oplus C_{ 12}$ is intended users ${1,2,3}$. User $1$ can decode $A_{ 23}$ by combining the received codeword with $\{B_{ 13},C_{ 12}\}$ given in its cache. The same approach is used for user $2$, $3$ to decode $B_{ 13}$, $C_{ 12}$ respectively.  
	\end{itemize}     
\end{example}

\subsection{Model for Transmissions over the Wireless Channel}\label{ssec:WirelessModel}

The channel in \eqref{eq:BFC} for a given realization $\hv$ is a stochastically degraded Broadcast Channel (BC) which achieves the same capacity region as the physically degraded BC \cite[Sec. 5]{el2011network}. The capacity region of the degraded broadcast channel for $K$ private messages and a common message is well-known \cite{el2011network}. However, in our case, due to the combination of files and the generation of codewords for subsets of users as explained in the previous subsection, we have an extended setup where the transmitter wishes to convey  $2^K-1$ mutually independent messages, denoted by $\{M_{\Jc}\}$, where $M_{\Jc}$ denotes the message intended to the users in subset $\Jc\subseteq \{1,\dots, K\}$. We require that each user $k$ must decode all messages $\{M_{\Jc}\}$ for $\Jc\ni k$. By letting $R_{\Jc}$ denote the multicast rate of the message $M_{\Jc}$, we say that the rate-tuple $\Rm\in \RR_+^{2^K-1}$ is achievable if there exists encoding and decoding functions which ensure the reliability and the rate condition as the slot duration $T_{\rm slot}$ is taken arbitrarily large. The capacity region is defined as the supremum of the achievable rate-tuple as shown in \cite{combes2018utility}, where the rate is measured in bit/channel use.   
\begin{theorem}\label{th:capacityDegradedBC}
	The capacity region $\Gamma(\hv)$ of a $K$-user degraded Gaussian broadcast channel with fading gains $h_1 \geq \dots \geq h_K$ and $2^K-1$ independent messages $\{M_{\Jc}\}$ is given by 
	\begin{equation}\label{eq:capRegion}
	\sum_{\Jc \subseteq \{1,\dots, k\}: k\in\Jc} R_{\Jc} \leq \log\frac{1+ h_k \sum_{j=1}^{k} p_j }{1+ h_k\sum_{j=1}^{k-1} p_j } \;\;\; k=1, \dots, K\end{equation}
	for non-negative variables $\{p_k\}$ such that $\sum_{k=1}^K p_k \leq P$. 
\end{theorem}
\begin{proof}
	The proof is quite straightforward and is based on rate-splitting and the private-message
	region of degraded broadcast channel. For completeness, see details in Appendix \ref{appendix:superp}.
\end{proof}

The achievability builds on superposition coding at the transmitter and successive interference cancellation at receivers. For $K=3$, the transmit signal is simply given by 
\[x = x_1 + x_2 + x_3 + x_{12} + x_{13}+ x_{23}+ x_{123},
\]
where $x_{\Jc}$ denotes the signal corresponding to the message $M_{\Jc}$ intended to the subset $\Jc\subseteq \{1,2,3\}$. We suppose that all $\{x_{\Jc}: \Jc\subseteq\{1,\dots,K\}\}$ are mutually independent Gaussian distributed random variables satisfying the power constraint. User 3 (the weakest user) decodes $\tilde{M}_3 =\{M_{3}, M_{13}, M_{23}, M_{123}\}$ by treating all the other messages as noise. User 2 decodes first the messages $\tilde{M}_3$ and then jointly decodes $\tilde{M}_2 =\{M_2, M_{12}\}$. Finally, user 1 (the strongest user) successively decodes $\tilde{M}_3, \tilde{M}_2$ and, finally, $M_1$. 

For the rest of the paper we will assume that the blocklength for transmission $T_{slot}$ is large enough so that transmissions at any rate within the capacity region $\Gamma(\hv)$ using the above achievability scheme  is indeed feasible with no errors.  Here we clarify that although we showed how to achieve the information-theoretic capacity region of a broadcast channel with fading, this does not solve our problem, but only clarifies how to use power control for the different codeword transmissions over the broadcast fading channel.

In order to correctly decode a requested file, a user $k$ must receive the codewords  generated for all subsets containing user $k$ when this file was combined. 

\begin{example}[A simple system operation]
	We conclude this Section with an example of the system operation for sime slots for $K=3$ users. Assume that the system starts ($t=0$) with only files $A, B, C$ waiting in the user queue for users 1,2,3 and are combined as in Example \ref{ex:codedCaching}. Then, at the start of $t=1$ the codewords queues will be filled with the bits of the corresponding subfiles shown in Fig. \ref{fig:ex3}. In that slot, assume that files $D$, $E$ and $F$ are admitted for users 1 and 2, respectively, and that codewords $A_{23}\oplus B_{12} \oplus C_{12}$ and  $A_{2}\oplus B_{1}$ are delivered. Then, at the beginning of slot $t=2$, the content of the user queues will be $D, E, F$ and that of the codeword queues: $A_{\emptyset}, B_{\emptyset}, C_{\emptyset}, \text{empty}, A_{3}\oplus C_{1}, B_{3}\oplus C_{2}, \text{empty}$ (user and codeword queues are ordered  as in Fig. \ref{fig:queueing_system_real} from left to right). At this slot, no admissions are made, no combinations and all single user transmissions are made, emptying queues for ${1}, {2}$ and leaving half of the content of queue ${3}$. At the next slot $t=3$, file $G$ for user 2 is admitted, a transmission to subsets ${3}, {13}, {23}$ are made so the queues are emptied and files for users 1 and 3 are combined. At the end of that slot, users 1,2,3 will be in a position to decode files $A, B, C$ and the queue states at the beginning of $t=4$ will be $\text{empty}, \{E, G\}, \text{empty}$ for user queues and $ D_{\emptyset}, \text{empty}, F_{\emptyset}, \text{empty}, \text{empty}, D_{3}\oplus F_{1}\text{empty}$ for codeword queues.      
\end{example}

\section{The Fair File Delivery Problem}
After having specified the operation of the system in high level, this section will formulate the problem of alpha-fair file delivery. 

Now we are ready to define the feasible rate region as the set of the average number of successfully delivered files for $K$ users. We let $\overline{r}_k$ denote \emph{time average delivery rate} of user $k$, measured in files per slot. We let $\Lambda$ denote the set of all feasible delivery rate vectors. 
\begin{definition}[Feasible rate]
	A rate vector $\overline{\rv}=(\overline{r}_1,\dots, \overline{r}_K)$, measured in file/slot, is said to be {\it feasible} $\overline{\rv}\in\Lambda$ if there exist a file combining and transmission scheme such that  
	\vspace{-0.05in}
	\begin{align}\label{eq:feasiblerate}
	\overline{{r}}_k=\liminf_{t\rightarrow\infty} \frac{\overline{D}_k(t)}{t}.
	\end{align}
	where $\overline{D}_k(t)$ denotes the number of successfully delivered files to user $k$ up to $t$. 
\end{definition}
It is worth noticing that as $t\rightarrow \infty$ the number of decoded files $\hat{D}_k(t)$ shall coincide with the number of successfully delivered files $\overline{D}_k(t)$ under the assumptions discussed previously. 
In contrast to the original framework \cite{maddah2013fundamental,maddah2013decentralized}, which focused on worst-case demands, our rate metric measures the ability of the system to continuously and reliably deliver requested files to the users. Since finding the optimal policy is very complex in general,  
we restrict our study to a specific class of policies given by the following mild assumptions: 
\begin{definition}[Admissible class policies $\Pi^{CC}$]
	The admissible policies have the following characteristics: 
	\begin{enumerate}
		\item The caching placement and delivery follow the decentralized scheme \cite{maddah2013decentralized}.
		\item The users request distinct files, i.e. the IDs of the requested files of any two users are different.
	\end{enumerate}
\end{definition}
Since we restrict our action space, the feasibility rate region, denoted by $\Lambda^{CC}$, under the class of policies $\Pi^{CC}$ is smaller than the one for the original problem $\Lambda$. However, the joint design of caching and online delivery appears to be a very hard problem; note that the design of an optimal code for coded caching alone (i.e. without considerations of scheduling, fading channels, and fairness) is an open problem and the proposed solutions are constant factor approximations. Restricting the caching strategy to the decentralized scheme proposed in \cite{maddah2013decentralized} makes the problem amenable to analysis and extraction of conclusions for general cases such as the general setup where users may not have the symmetrical rates. For the assumption of two users requesting the same file simultaneously, it would have been more efficient to handle exceptionally the transmissions as naive broadcasting instead of using the decentralized coded caching scheme, yielding a small efficiency benefit but compounding further the problem. Note, however, the probability that two users simultaneously request the same parts of video is very low in practice, hence to simplify our model we exclude this consideration altogether. {We also point out here the work in \cite{Yu2018_exactTradeoff}, where the authors  proposed a simple scheme to deal with the case where multiple users request the same file. This scheme improves has been considered also for the  Gaussian Broadcast Channel in \cite{Amiri2017}. Although it is therefore possible to relax Assumption 2 using the delivery scheme of \cite{Yu2018_exactTradeoff, Amiri2017}, this is out of the scope of our paper.}

Our objective is to solve the \emph{fair file delivery} problem:
\begin{align}\label{eq:problem}
\overline{\boldsymbol{r}}^* = &\arg\max_{\overline{\boldsymbol{r}} \in \Lambda^{CC}}\sum_{k=1}^Kg_k(\overline{r}_k),
\end{align}

where the utility function corresponds to the \emph{alpha fair} family of  concave functions obtained by choosing: 
{\begin{align}
	g_k(x) = \begin{cases}
	w_k\frac{(d+x)^{1-\alpha}}{1-\alpha}, \alpha\neq 1\\
	w_k\log(1+x/d), \alpha = 1
	\end{cases}
	\end{align}
	for $w_k\in [0, 1]$ and }some arbitrarily small $d>0$ (used to extend the domain of the functions to $x=0$).  Tuning the value of $\alpha$ changes the shape of the utility function and consequently drives the system performance $\overline{\boldsymbol{r}}^*$ to different operating points: {Let us first assume that $w_k = 1,\forall k$. Then:}(i) $\alpha=0$ yields max sum delivery rate, (ii)  $\alpha\to\infty$ yields max-min delivery rate \cite{mowalrand}, (iii)   $\alpha = 1$ yields proportionally fair delivery rate \cite{pfscheduling}.
Choosing $\alpha\in (0,1)$ leads to a tradeoff between max sum and proportionally fair delivery rates. {We can achieve all  other points in the region by using unequal values for the weights $w_k$.}


The optimization  \eqref{eq:problem}  is designed to allow us tweak the performance of the system;  we highlight its importance by an example. 
Suppose that for a 2-user system $\Lambda$ is given by the convex set shown on Fig. ~\ref{fig:example}.
Different boundary points are obtained as solutions to \eqref{eq:problem}. {Let $w_k=1,  \forall k$.}
If we choose $\alpha=0$, the system is operated at the point that maximizes the sum $\overline{r}_1+\overline{r}_2$. The choice $\alpha\to\infty$ leads to the maximum $r$ such that $\overline{r}_1=\overline{r}_2=r$, while $\alpha=1$ maximizes the sum of logarithms. 
The  operation point A is obtained when we always broadcast to all users at the weakest user rate and use \cite{maddah2013fundamental} for coded caching transmissions.
Note that this results in a significant loss of efficiency due to the variations of the fading channel, and consequently A lies in the interior of $\Lambda$.
To reach the boundary point that corresponds to $\alpha\to\infty$ we need to carefully group users together with good instantaneous channel quality but also serve users with poor average channel quality. This shows the necessity of our approach when using coded caching in realistic wireless channel conditions.

\begin{figure}
	\begin{center}
		\includegraphics[width=0.25\textwidth,clip=]{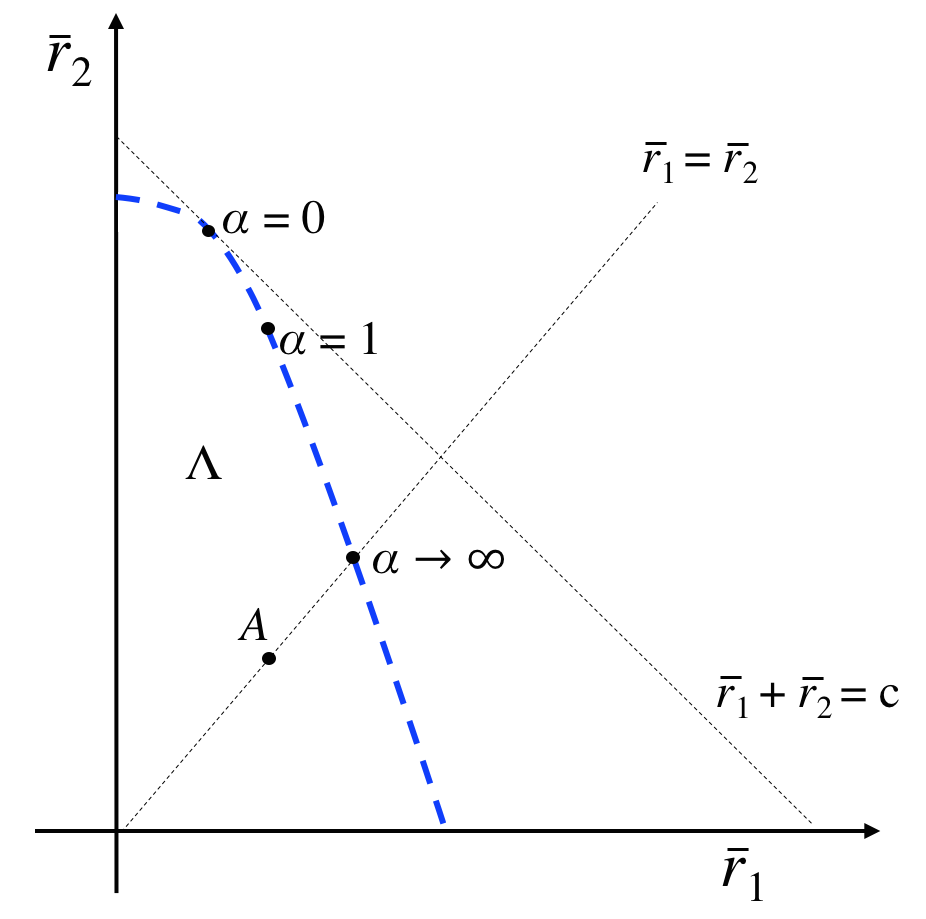}
		\caption{Illustration of the feasibility region and different performance operating points for $K=2$ users. Point A corresponds to a naive adaptation of \cite{maddah2013fundamental} on our channel model, while the rest points are solutions to our fair delivery problem. }
		\label{fig:example}
	\end{center}
\end{figure}

\subsection{Feasibility Region} 
In order to proceed, we will characterize the set of feasible file delivery rates via characterizing the stability performance of the queueing system os Section \ref{subsec:queueingStructure}.  To this end, let $\overline{a}_k = \limsup\limits_{t\rightarrow\infty}\frac{1}{t}\sum_{t=0}^{t-1}\mathbb{E}\left[a_k(t)\right],$
denote the time average number of admitted files for user $k$. We use the following definition of stability: 

\begin{definition}[Stability]\label{ef:stability} 
	A queue $S(t)$ is said to be (strongly) stable if
	\vspace{-0.1in}
	\[
	\limsup\limits_{T\rightarrow\infty}\frac{1}{T}\sum_{t=0}^{T-1}\mathbb{E}\left[S(t)\right] < \infty.
	\]
	A queueing system is said to be stable if all its queues are stable. Moreover, the stability region of a system is the set of all vectors of admitted file rates such that the system is stable.   
\end{definition}

Note that the expectation above is over fading gains, arrival processes and possibly random service processes of the queues up to time slot $t$. If the queueing system we have introduced is stable the rate of admitted files (input rate) is equal to the rate of successfully decoded files (output rate), hence we can characterize the system performance by means of the stability region of our queueing system. We let $\Gamma(\hv)$ denote the capacity region for a fixed channel state $\mathbf{h}$, as defined in Theorem \ref{th:capacityDegradedBC}. Then we have the following: 

\begin{theorem}[Stability region]\label{th:feasibilityRegion}
	Define the set $\Gamma^{CC}$ to be \[
	\Gamma^{CC} = \bigg\{  \bar{\av} \in R_+^K \big| \eqref{eq:all_packets_get_combined}, \eqref{eq:all_codewords_get_transmitted},
	\overline{\muv} \in \sum_{\hv}\phi_{\hv}\Gamma(\hv),\overline{\sigmav} \in [0,\sigma_{max}]^{2^K-1}     \bigg\},
	\]
	where 
	\begin{align}\label{eq:all_packets_get_combined}
	\sum_{\Jc: k\in \Jc}\overline{\sigma}_{\Jc} & \geq \overline{a}_k, \forall k =1,\dots,K \\ \label{eq:all_codewords_get_transmitted} 
	T_{\rm slot}\overline{\mu}_{\Ic} &\geq \sum_{\Jc: \Ic \subseteq \Jc} b_{\Jc, \Ic}\overline{\sigma}_{\Jc}, 
	\forall \Ic \subseteq \{1,2,...,K\}
	.	\end{align}
	Then, the stability region of the system is the interior of $\Gamma^{CC}$, where the above inequalities are strict. 
\end{theorem}
Constraint \eqref{eq:all_packets_get_combined} says that the aggregate service rate is greater than the arrival rate, while \eqref{eq:all_codewords_get_transmitted} implies that the long-term average rate for the subset $\Jc$ is greater than the arrival rate of the codewords intended to this subset. In terms of the queueing system defined, these constraints impose that the service rates of each queue should be greater than their arrival rates, thus rendering them stable \footnote{We restrict  vectors $\overline{\av}$  to the interior of $\Gamma^{CC}$, since arrival rates at the boundary are exceptional cases of no practical interest, and require special treatment.}. The proof of this theorem relies on existence of static policies, i.e. randomized policies whose decision distribution depends only on the realization of the channel state. See the Appendix, Section \ref{ssec:StaticPolicies} for a definition and results on these policies.   

We now restrict to the use of Markovian policies, i.e. policies that choose $\{\av(t), \sigmav(t), \muv(t)\}$ based only the state of the system at the beginning of time slot t, $\{\hv(t), \Sm(t), \Qm(t)\}$, and not the time index itself. Since the channel process $\hv(t)$ is a sequence of i.i.d. realizations of the channel states (the same results hold if, more generally, $\hv(t)$ is an ergodic Markov chain), we can obtain any admitted file rate vector $\overline{\av}$ in the stability region by using such policies, therefore the restriction is without loss of optimality. In addition, under any Markovian poliy, the state $(\Sm(t), \Qm(t))$ evolves as a Markov chain, which implies that our stability definition is equivalent to that Markov chain being ergodic with every queue having finite mean under the stationary distribution. We can the conclude that if we develop a policy that keeps {\it user queues} $\Sm(t)$ stable, then all admitted files will, at some point, be combined into codewords. Additionally, if {\it codeword queues} $\Qm(t)$ are stable, then all generated codewords will be successfully conveyed to their destinations. This in turn means that all receivers will be able to decode the admitted files that they requested: 

\begin{lemma}\label{lem:equivalence}
	The region of all feasible delivery rates $\Lambda^{CC}$ is the same as the stability region of the system, i.e. $\Lambda^{CC} = Int(\Gamma^{CC})$.
\end{lemma}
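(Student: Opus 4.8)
The plan is to treat this lemma as the bridge between the delivery-rate description of $\Lambda^{CC}$ in \eqref{eq:feasiblerate} and the admitted-rate description of $\Gamma^{CC}$ in Theorem \ref{th:feasibilityRegion}. The whole content reduces to a single observation: \emph{whenever all queues of the system are stable, the time-average delivery rate of every user coincides with its time-average admission rate}, i.e. $\overline{r}_k=\overline{a}_k$. Granting this, Theorem \ref{th:feasibilityRegion} already identifies the set of stabilizable admission vectors as $Int(\Gamma^{CC})$, so I would only need to prove the two inclusions $Int(\Gamma^{CC})\subseteq\Lambda^{CC}$ and $\Lambda^{CC}\subseteq\overline{\Gamma^{CC}}$ and then close the gap at the boundary.

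For the inclusion $Int(\Gamma^{CC})\subseteq\Lambda^{CC}$, fix $\overline{\av}\in Int(\Gamma^{CC})$ and invoke Theorem \ref{th:feasibilityRegion} to obtain a policy under which every $S_k$ and every $Q_{\Ic}$ is strongly stable in the sense of Definition \ref{ef:stability}. Summing the user-queue recursion \eqref{eq:userQueues} over $t=0,\dots,T-1$, dividing by $T$, and letting $T\to\infty$, strong stability (equivalently, ergodicity of the chain with finite mean queue length) forces $\mathbb{E}[S_k(T)]/T\to 0$, so the long-run rate at which files leave $S_k$ (are combined into codewords) equals the admission rate $\overline{a}_k$. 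Applying the same telescoping argument to the codeword recursion \eqref{eq:codewordQueues}, stability of $Q_{\Ic}$ forces the transmitted-bit rate to equal the rate of bits created by combining, so every codeword generated is eventually delivered. Chaining the two conservation statements (admitted $\to$ combined $\to$ transmitted $\to$ decoded) and using that under a stabilizing Markov policy the empirical and expected averages coincide, I obtain $\overline{D}_k(t)/t\to\overline{a}_k$, hence $\overline{r}_k=\overline{a}_k$ and $\overline{\av}\in\Lambda^{CC}$.

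The converse $\Lambda^{CC}\subseteq\overline{\Gamma^{CC}}$ is where the real work lies. Here I take an \emph{arbitrary} scheme in $\Pi^{CC}$ achieving a delivery-rate vector $\overline{\rv}$ and must exhibit averages $\overline{\sigmav}$ and $\overline{\muv}$ certifying $\overline{\rv}\in\Gamma^{CC}$. Along the sample path of this scheme I would define the empirical combination frequencies $\frac{1}{t}\sum_{s<t}\sigma_{\Jc}(s)$ and per-state transmission frequencies, and pass to a subsequence (or use a Ces\`aro/compactness argument) along which they all converge to limits $\overline{\sigma}_{\Jc}$ and $\overline{\mu}_{\Ic}$. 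A file delivered to user $k$ must first have been combined in some $\sigma_{\Jc}$ with $k\in\Jc$, which yields $\sum_{\Jc:k\in\Jc}\overline{\sigma}_{\Jc}\ge\overline{r}_k$, i.e. \eqref{eq:all_packets_get_combined}; and every codeword bit that is decoded must have been transmitted, which after accounting for the generated load $b_{\Jc,\Ic}\sigma_{\Jc}$ yields \eqref{eq:all_codewords_get_transmitted}. Finally, the per-slot rate vector lies in $\Gamma(\hv(t))$, and by grouping slots according to the finitely many channel realizations and applying Carath\'eodory, the limiting $\overline{\muv}$ can be written in $\sum_{\hv\in\mathcal{H}}\phi_{\hv}\Gamma(\hv)$ with $\phi_{\hv}$ the fraction of time spent in state $\hv$. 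Hence $\overline{\rv}\in\Gamma^{CC}$.

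I expect the converse to be the main obstacle, for two reasons: it must reason about schemes that need not use the particular queueing structure, so the flow-balance inequalities have to be recovered purely from counting delivered versus generated objects; and the interchange of $\liminf$ (in the definition of $\overline{r}_k$) with the time-averages of the control variables requires a careful subsequence/compactness argument rather than a direct limit. Once both inclusions are in hand, convexity of $\Gamma^{CC}$ together with the observation (already flagged in the footnote after Theorem \ref{th:feasibilityRegion}) that boundary rate vectors are exceptional and of no practical interest lets me identify the two regions up to their boundary, giving $\Lambda^{CC}=Int(\Gamma^{CC})$.
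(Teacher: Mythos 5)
Your proposal is correct and follows essentially the same route as the paper: both reduce the lemma to the identity $\overline{r}_k=\overline{a}_k$ under stability, prove the inclusion $Int(\Gamma^{CC})\subseteq\Lambda^{CC}$ by invoking the static stabilizing policies behind Theorem~\ref{th:feasibilityRegion}, and handle the converse by a flow-conservation argument, deferring the boundary of $\Gamma^{CC}$ to the same footnote convention. The one technical difference is how $\overline{r}_k=\overline{a}_k$ is established: the paper notes that under the static policy the state $(\Sm(t),\Qm(t),\hv(t))$ is an ergodic Markov chain returning to the all-empty state infinitely often, and argues per regeneration cycle that admitted files equal delivered files, so the long-run rates coincide; you instead telescope the queue recursions \eqref{eq:userQueues} and \eqref{eq:codewordQueues} and use $\mathbb{E}[S_k(T)]/T\to 0$. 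Both are standard; the regenerative argument sidesteps explicit bookkeeping of the $[\cdot]^+$ truncations, while yours is more elementary. For the converse, the paper is terser than you are --- it simply observes that if $\overline{\av}\notin\Gamma^{CC}$ then for every admissible $\overline{\sigmav}$ and every $\overline{\muv}\in\sum_{\hv}\phi_{\hv}\Gamma(\hv)$ some inequality \eqref{eq:all_codewords_get_transmitted} must fail, so the wireless capacity is insufficient --- whereas your empirical-frequency/compactness (Carath\'eodory) argument makes explicit why such limiting averages exist for an arbitrary scheme in $\Pi^{CC}$, a step the paper leaves implicit; your version is the more careful of the two on this point.
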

\begin{proof}
	Please refer to Appendix \ref{appendix:equivalence_proof}. 
\end{proof}
\noindent Lemma \ref{lem:equivalence} implies the following Corollary. 
\begin{corollary}\label{cor:equivalentOptimization}
	Solving \eqref{eq:problem} is equivalent to finding a policy $\pi$ such that 
	\begin{align}\label{eq:objective2}
	\overline{\boldsymbol{a}}^{\pi} =& \arg\max\sum_{k=1}^Kg_k(\overline{a}_k)\\ 
	\text{s.t.}\quad & \text{the system is stable.} \notag
	\end{align}
\end{corollary}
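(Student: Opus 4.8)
The plan is to reduce the region-level problem \eqref{eq:problem} to the policy-level problem \eqref{eq:objective2} in two moves: first replacing the delivery rate $\overline{r}_k$ by the admission rate $\overline{a}_k$ via an input-output flow balance that holds for every stable policy, and then replacing the feasible set $\Lambda^{CC}$ by the set of admission-rate vectors achievable under stable policies, which Lemma \ref{lem:equivalence} identifies with $Int(\Gamma^{CC})=\Lambda^{CC}$.

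The first move is the flow-balance identity $\overline{r}_k=\overline{a}_k$ for any policy under which the queueing system is stable. I would obtain it by telescoping the queue recursions: summing \eqref{eq:userQueues} over a horizon and dividing by $t$ shows that the time-average rate at which files for user $k$ are combined into codewords equals $\overline{a}_k$ minus the Ces\`aro limit of $S_k(t)/t$; strong stability (Definition \ref{ef:stability}) forces $\mathbb{E}[S_k(t)]/t\to 0$, so all admitted files are eventually served out of the user queues. Applying the same argument to \eqref{eq:codewordQueues} shows every generated codeword is eventually transmitted, and since $T_{\rm slot}$ is large enough for error-free decoding, every transmitted codeword is decoded. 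Chaining these, the delivery count $\overline{D}_k(t)$ in \eqref{eq:feasiblerate} grows at the same rate as the admissions, giving $\overline{r}_k=\overline{a}_k$.

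The second move is then immediate. By Theorem \ref{th:feasibilityRegion} the admission-rate vectors $\overline{\av}$ realizable by stable policies are exactly $Int(\Gamma^{CC})$, and by Lemma \ref{lem:equivalence} this equals $\Lambda^{CC}$. Hence each $\overline{\rv}\in\Lambda^{CC}$ is realized as $\overline{\rv}=\overline{\av}$ by some stable policy, and each stable policy yields $\overline{\av}\in\Lambda^{CC}$ with $\overline{\rv}=\overline{\av}$. Substituting $\overline{r}_k=\overline{a}_k$ into the objective converts $\max_{\overline{\rv}\in\Lambda^{CC}}\sum_k g(\overline{r}_k)$ into $\max\sum_k g(\overline{a}_k)$ over stable policies, which is exactly \eqref{eq:objective2}, and the maximizers correspond one-to-one. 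To close the gap that $\Lambda^{CC}=Int(\Gamma^{CC})$ is open while the maximum of the concave, continuous, increasing utility may sit on the boundary, I would use that $\Gamma^{CC}$ is convex with compact closure, so $\sup_{Int(\Gamma^{CC})}\sum_k g=\max_{\mathrm{cl}(\Gamma^{CC})}\sum_k g$; both problems share this optimal value, approached arbitrarily closely by stable policies.

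The step I expect to be the main obstacle is the rigorous flow-balance identity $\overline{r}_k=\overline{a}_k$: one must ensure the averaged backlog terms arising from \eqref{eq:userQueues} and \eqref{eq:codewordQueues} vanish under strong stability, handle the $[\cdot]^+$ truncations in the recursions, and reconcile the $\liminf$ defining the delivery rate in \eqref{eq:feasiblerate} with the $\limsup$ defining $\overline{a}_k$ so that the two time averages genuinely coincide rather than merely bound one another.
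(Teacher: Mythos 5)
Your proposal is correct, and its overall decomposition (replace $\overline{r}_k$ by $\overline{a}_k$ via flow conservation under stability, then identify the set of achievable admission rates with $\Lambda^{CC}$ via Theorem~\ref{th:feasibilityRegion} and Lemma~\ref{lem:equivalence}) is exactly the logical path the paper intends; the paper in fact offers no separate proof of the corollary, treating it as an immediate consequence of Lemma~\ref{lem:equivalence}. Where you differ is in how the key identity $\overline{r}_k=\overline{a}_k$ is established. The paper buries this inside the proof of Lemma~\ref{lem:equivalence}: restricting to Markovian/static policies, $(\Sm(t),\Qm(t),\hv(t))$ is an ergodic Markov chain, the empty-queue state is visited infinitely often, and over each regenerative cycle the number of admitted and delivered files coincide exactly, so ergodic averaging gives $\overline{r}_k=\overline{a}_k$ with genuine limits (which also dissolves your $\liminf$/$\limsup$ worry for free). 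You instead telescope \eqref{eq:userQueues} and \eqref{eq:codewordQueues} and argue that the backlog terms vanish. Be aware that strong stability in the sense of Definition~\ref{ef:stability} does \emph{not} by itself imply $\mathbb{E}[S_k(t)]/t\to 0$; you need the additional (standard) step that strong stability plus uniformly bounded per-slot increments (which hold here since $a_k(t)\le\gamma_{k,\max}$ and $\sigma_{\Jc}(t)\le\sigma_{\max}$) implies rate stability. With that lemma cited, your route is actually slightly more general than the paper's, since it does not require the policy to induce an ergodic Markov chain; the paper's regenerative argument buys exact per-cycle file accounting (a whole file is delivered only when all its sub-files across the several codeword queues $Q_{\Ic}$, $\Ic\subseteq\Jc$, have been flushed), which your bit-level flow argument glosses over slightly. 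Your handling of the open-region/boundary issue via $\sup_{Int(\Gamma^{CC})}=\max_{\mathrm{cl}(\Gamma^{CC})}$ is a reasonable patch; the paper simply excludes boundary points by fiat in a footnote.
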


This implies that the solution to the original problem \eqref{eq:problem} in terms of the long-term average rates is equivalent to the new problem in terms of the admission rates stabilizing the system. Next Section provides a set of the explicit solutions to this new problem.  
%

\section{Proposed Online Delivery Scheme}\label{sec:proposed}

\subsection{Admission Control and Codeword Routing}

Our goal is to find a control policy that optimizes \eqref{eq:objective2}. To this aim, we need to introduce one more set of queues. These queues are virtual, in the sense that they do not hold actual file demands or bits, but are merely counters to drive the control policy. Each user $k$ is associated with a queue $U_k(t)$ which evolves as follows: 
\begin{align}\label{eq:utilityQueues}
U_k(t+1) = \left[U_k(t) - a_k(t)\right]^+ + \gamma_k(t)
\end{align}
where $\gamma_k(t)$ represents the arrival process to the virtual queue and is an additional control parameter.  We require these queues to be  stable:  The actual mean file admission rates are greater than the virtual arrival rates and the control algorithm actually seeks to optimize the time average of the virtual arrivals $\gamma_k(t)$. However, since $U_k(t)$ is stable, its service rate, which is the actual admission rate, will be greater than the rate of the virtual arrivals, therefore giving the same optimizer. Stability of all other queues will guarantee that admitted files will be actually delivered to the users. With thee considerations, $U_k(t)$ will be a control indicator such that when $U_k(t)$ is above $S_k(t)$ then we admit files into the system else we set $a_k(t)=0$. In particular, we will control the way $U_k(t)$ grows over time using the actual utility objective $g_k(.)$ such that a user with rate $x$ and rapidly increasing utility $g_k(x)$ (steep derivative at $x$) will also enjoy a rapidly increasing $U_k(t)$ and hence admit more files into the system. 

In our proposed policy, the arrival process to the virtual queues are given by 
\begin{equation}\label{eq:virtualArrivalsOpt}
\gamma_k(t) = \arg\max\limits_{0\leq x\leq \gamma_{k,\max}}\left[Vg_k(x) - U_k(t)x\right]
\end{equation}
In the above, $V>0$ is a parameter that controls the utility-delay tradeoff achieved by the algorithm (see Theorem \ref{th:optimality_infinite}). 

We present our on-off policy for admission control and routing.  
For every user $k$, admission control chooses $a_k(t)$ demands given by 
\begin{align}\label{eq:adm}
a_k(t) = \gamma_{k, \max} \onev\{U_k(t) \geq S_k(t) \}
\end{align}
For every subset $\Jc \subseteq \{1,\dots, K\}$, routing combines $\sigma_{\Jc}(t)$ demands of users in $\Jc$ given by
\begin{align}\label{eq:sig}
\sigma_{\Jc}(t) = \sigma_{\max} \onev\left\{ \sum_{k\in\Jc}S_k(t) > \sum_{\Ic: \Ic \subseteq \Jc}\frac{b_{\Jc, \Ic}}{F^2} Q_{\Ic}(t)  \right\}
.\end{align}


\subsection{Scheduling and Transmission}
In order to stabilize all {\it codeword queues}, the scheduling and resource allocation explicitly solves the following weighted sum rate maximization at each slot $t$, where the weight of the subset $\Jc$ corresponds to the queue length of $Q_{\Jc}$ 
\begin{align}\label{eq:WSRstability}
\muv(t) = \arg\max\limits_{\rv \in\Gamma(\hv(t))}\sum_{\Jc\subseteq \{1, \dots, K\}} Q_{\Jc}(t)r_{\Jc}.
\end{align}

Since there are $2^K-1$ codeword queues (one for each subset of users), the above is a potentially very complicated problem. However, by exploiting the properties of the capacity region $\Gamma(\hv(t))$, the problem can be simplified as the following result shows:  
\begin{theorem}
	Rearrange the user indices such that $h_1(t)\geq h_2(t)\geq\dots\geq h_{K}(t)$. Then, the weighted sum rate maximization with $2^K-1$ variables in \eqref{eq:WSRstability} reduces to a simpler problem with $K$ variables, given by 
	\begin{align}\label{eq:powerallocation}
	\max_{\pv} \sum_{k=1}^K \tilde{\theta}_k \log\frac{1+h_k(t) \sum_{j=1}^{k}p_j }{1+ h_k(t)\sum_{j=1}^{k-1} p_j }. 
	\end{align}
	where $\pv=(p_1,\dots, p_K)\in \RR_+^K$ is a positive real vector satisfying the total power constraint, and $\tilde{\theta}_k$ denotes the largest weight for user $k$ 
	\[
	\tilde{\theta}_k=\max_{\Kc: k\in \Kc \subseteq \{1,\dots, k\}}Q_{\Kc}(t).
	\]
\end{theorem}
\begin{proof}
	The proof builds on the simple structure of the capacity region. We remark that for a given power allocation of users $1$ to $k-1$, user $k$ sees $2^{k-1}$ messages $\{M_{\Jc}\}$ for all $\Jc$ such that $k\in \Jc \subseteq \{1,\dots, k\}$ with the equal channel gain. For a given set of $\{p_j\}_{j=1}^{k-1}$, the capacity region of these messages is a simple hyperplane 
	characterized by $2^{k-1}$ vertices $\tilde{R}_{k} \ev_i$ for $i=1, \dots, 2^{k-1}$, where $\tilde{R}_{k}$ is the sum rate of user $k$ in the RHS of \eqref{eq:capRegion} and $\ev_i$ is a vector with one for the $i$-th entry and zero for the others. Therefore, the weighted sum rate is maximized for user $k$ by selecting the vertex corresponding to the largest weight, denoted by $\tilde{\theta}$. This holds for any $k$. 
\end{proof}

We provide an efficient algorithm to solve this power allocation problem as a special case of the parallel Gaussian broadcast channel studied in \cite[Theorem 3.2]{TseOptimal}. Following \cite{TseOptimal}, we define the rate utility function for user $k$ given by 
\begin{align}
u_k(z)= \frac{\tilde{\theta}_k}{1/h_k(t)+z}-\lambda,
\end{align}
where $\lambda$ is a Lagrangian multiplier. The optimal solution corresponds to selecting 
the user with the maximum rate utility at each $z$ and the resulting power allocation for user $k$ is 
\begin{align}\label{eq:optimalalpha}
p^*_k = \left\{ z:   [\max_j u_j(z) ]_+ = u_k(z)  \right \}
\end{align}
with $\lambda$ satisfying 
\begin{align} \label{eq:195}
P=\left [ \max_k \frac{\tilde{\theta}_k}{\lambda} -\frac{1}{h_k(t)} \right]_+.
\end{align}

Algorithm 1 summarizes our online delivery scheme. 




\subsection{Practical Implementation}\label{subsection:PI}
When user requests arrive dynamically and the delivery phase is run continuously, it is not clear when and how the base station shall disseminate the useful side information to each individual users. This motivates us to consider a practical solution which associates a header to each sub-file $W_{i|\Jc}$ for $i=1, \dots, N$ and $\Jc\subseteq \{1, \dots, K\}$ . Namely, any sub-file shall indicate the following information prior to message symbols: a) the indices of files; b) the identities of users who cache (know) the sub-files \footnote{We assume here for the sake of simplicity that the overhead due to a header is negligible. This implies in practice that each of sub-files is arbitrarily large.}.

At each slot $t$, the base station knows the cache contents of all users $Z^K$, the sequence of the channel state $\hv^t$, as well as that of the demand vectors $\dv^t$. Given this information, the base station constructs and transmits either a message symbol or a header at channel use $i$ in slot $t$ as follows. 
\begin{align}
x_i(t) = \begin{cases}
f^{\rm h}_{t, i}(\dv^t, Z^K) & \text{if header }\\
f^{\rm m}_{t,i}(\{W_{d_k(\tau)}: \forall k,\tau \leq t\}, \hv^t) & \text{if message}
\end{cases}
\end{align}
where $f^{\rm h}_{t, i},f^{\rm m}_{t, i}$ denotes the header function, the message encoding function, respectively, at channel use $i$ in slot $t$.


\begin{table*}[ht]
	\caption{Parameters}
	\label{tab:1}
	\begin{center}
		
		\begin{tabular}{ll}\hline
			$Q_{\Ic}(t)$ & 
			codeword queue storing XOR-packets intended users in $\Ic$. \\
			
			$S_{k}(t)$& user queue storing admitted files for user $k$. \\
			$U_{k}(t)$& virtual queue for the admission control. \\
			\hline
			$\sigma_{\Jc}(t)$ & decision variable of number of combined requests for users $\Jc$ in $[0,\sigma_{\max}]$.\\
			$\mu_{\Ic}(t)$ & decision variable for multicast transmission rate to users $\Ic$.\\
			
			$a_{k}(t)$& decision variable of the number of admitted files for user $k$ in $[0,\gamma_{\max}]$. \\
			$\gamma_k(t)$& the arrival process to the virtual queue in $[0,\gamma_{\max}]$, given by eq. \eqref{eq:virtualArrivalsOpt}. \\
			\hline
			$Z_{k}$ & cache content for user $k$\\
			$\overline{D}_k(t)$ & number of successfully decoded files by user $k$ up to slot $t$.\\ 
			$\overline{A}_k(t)$ & number of (accumulated) requested files by user k up to slot $t$.\\ 
			$\overline{r}_k$ & time average delivery rate equal to $\liminf_{t\rightarrow\infty} \frac{\overline{D}_k(t)}{t}$ in files/slot.\\
			$\lambda_{k}$ & mean of the arrival process.\\
			$b_{\Jc,\Ic}$ & length of codeword intended to users $\Ic$ from applying coded caching for user in $\Jc$. \\
			$T_{\rm slot}$ & number of channel use per slot.\\
			$\Gamma(\hv)$ & the capacity region for a fixed channel state $\hv$.\\
			$\Hc$ & the set of all possible channel states.\\
			$\phi_{\hv}$ & the probability that the channel state at slot $t$ is $\hv\in \Hc$.\\
			\hline
		\end{tabular}
	\end{center}
\end{table*}

\begin{algorithm}
	\begin{algorithmic}[1]\label{algo}
		\State \textbf{PLACEMENT (same as \cite{maddah2013decentralized}):}
		
		\State Fill the cache of each user $k$
		
		$Z_k =\{ W_{i \cond \Jc}:  \;\; \Jc \subseteq\{1,\dots,K\}, k\in\Jc , \forall i =1,\dots, N \}.$
		
		\State \textbf{DELIVERY:} 
		\State \textbf{for} $t = 1,\dots,T$
		
		\State Decide the arrival process to the virtual queues 
		
		$\gamma_k(t) = \arg\max\limits_{0\leq x\leq \gamma_{k,\max}}\left[Vg_k(x) - U_k(t)x\right]$
		
		\State  Decide the number of admitted files

		$a_k(t) = \gamma_{k, \max} \onev\{U_k(t) \geq S_k(t) \}$ . 
		\State Update the virtual queues
		
		$U_k(t+1) = \left[U_k(t) - a_k(t)\right]^+ + \gamma_k(t)$
		\State Decide the number of files to be combined 
		\begin{flalign}
		\sigma_{\Jc}(t) = \sigma_{\max} \onev\left\{ \sum_{k\in\Jc}S_k(t) > \sum_{\Ic: \Ic \subseteq \Jc}\frac{b_{\Jc, \Ic}}{F^2} Q_{\Ic}(t)  \right\}.\nonumber
		\end{flalign}
		\State Scheduling decides the instantaneous rate

		$\muv(t) = \arg\max\limits_{\rv \in\Gamma(\hv(t))}\sum_{\Jc\subseteq \{1, \dots, K\}} Q_{\Jc}(t)r_{\Jc}.$
		\State Update user queues and codeword queues:
		
		$S_k(t+1) = \left[S_k(t) - \sum_{\Jc: k\in \Jc}\sigma_{\Jc}(t)\right]^+ + a_k(t) $,
		\[Q_{\Ic}(t+1) =  \left[Q_{\Ic}(t) - T_{\rm slot}\mu_{\Ic}(t)\right]^+  + \sum_{\Jc:\Ic\subseteq\Jc}b_{\Jc,\Ic}\sigma_{\Jc}(t).\]
		
	\end{algorithmic}
	\caption{Proposed delivery scheme}
	\label{alg1}
\end{algorithm}

\begin{table*}[ht]
	\vspace{-10pt}
	\caption{Codeword queues evolution for $\mu_{\{1,2\}}(t)>0$, $\mu_{\{1,2,3\}}(t)>0$ and $\sigma_{\{1,2\}}(t)=\sigma_{\{1\}}(t)=1$.}
	\label{tab:2}
	\begin{center}
		\begin{tabular}{|c|c|c|c|c|c|c|c|}\hline
			& $\Qc_{\{1\}}$ & $\Qc_{\{2\}}$ & $\Qc_{\{3\}}$& $\Qc_{\{1,2\}}$ & $\Qc_{\{1,3\}}$ & $\Qc_{\{2,3\}}$ & $\Qc_{\{1,2,3\}}$ \\
			\hline
			$\Qc_{\Jc}(t)$ & $A_{\emptyset}$ & $B_{\emptyset}$ & $C_{\emptyset}$ & $A_{2}\oplus B_{1}$ &  $A_{3}\oplus C_{1}$  &  $B_{3}\oplus C_{2}$ &  $A_{23}\oplus B_{13}\oplus C_{12}$\\
			\hline
			\makecell{Output\\ $\mu_{\{1,2\}}(t)>0$, $\mu_{\{1,2,3\}}(t)>0$}   & - & - & - & $A_{2}\oplus B_{1}$ &  -  &  - &  $A_{23}\oplus B_{13}\oplus C_{12}$\\
			\hline
			\makecell{ Input\\ $\sigma_{\{1,2\}}(t)=\sigma_{\{1\}}(t)=1$} & \makecell{$D_{\emptyset}; D_{3}$ \\ $\{F_{\Jc}\}_{1\notin \Jc}$ } &  $E_{\emptyset}$; $E_{3}$  & - & \makecell{$E_{1}\oplus D_{2}$\\ $E_{13}\oplus D_{23}$ } &  -  &  - &  - \\
			\hline
			$\Qc_{\Jc}(t+1)$ & \makecell{$A_{\emptyset}$; $D_{\emptyset}$; $D_{3}$ \\ $\{F_{\Jc}\}_{1\notin \Jc}$ } & \makecell{$B_{\emptyset}$\\ $E_{\emptyset}$; $E_{3}$} & $C_{\emptyset}$ & \makecell{$E_{1}\oplus D_{2}$\\ $E_{13}\oplus D_{23}$ } &  $A_{3}\oplus C_{1}$  &  $B_{3}\oplus C_{2}$ &  -\\
			\hline
			
		\end{tabular}
	\end{center}
\end{table*}
\begin{example} 
	We conclude this section by providing an example of a slot for our online delivery scheme for $K=3$ users as illustrated in Fig.~ \ref{fig:queueing_system_real}. 
	
	We focus on the evolution of codeword queues between two slots, $t$ and $t+1$. The exact backlog of codeword queues is shown in Table \ref{tab:2}. Given the routing and scheduling decisions ($\sigma_{\Jc}(t)$ and $\mu_{\Jc}(t)$), we provide the new states of the queues at the next slot in the same Table.
	
	We suppose that $h_1(t)>h_2(t)>h_3(t)$. The scheduler uses \eqref{eq:WSRstability} to allocate positive rates to user set $\{1,2\}$ and $\{1,2,3\}$ given by $\mu_{\{1,2\}}, \mu_{\{1,2,3\}}$ and multicasts the superposed signal $x(t)=B_{\emptyset}+ B_{3}\oplus C_{2}$. User $3$ decodes only $B_{3}\oplus C_{2}$.  User $2$ decodes first $B_{3}\oplus C_{2}$, then subtracts it and decodes  $B_{\emptyset}$. Note that the sub-file $B_{\emptyset}$ is simply a fraction of the file $B$ whereas the sub-file $B_{3}\oplus C_{2}$ is a linear combination of two fractions of different files. In order to differentiate between each sub-file, each user uses the data information header existing in the received signal. In the next slot, the received sub-files are evacuated from the codeword queues.
	
	For the routing decision, the server decides at slot $t$ to combine $D$ requested by user $1$ with $E$ requested by user $2$ and to process $F$ requested by user $1$ uncoded. Therefore, we have $\sigma_{\{1,2\}}(t)=\sigma_{\{1\}}(t)=1$ and $\sigma_{\Jc}(t)=0$ otherwise. Given this codeword construction, codeword queues have inputs that change its state in the next slot as described in Table \ref{tab:2}. 
\end{example}

\subsection{Performance Analysis}

Here  we present the main result of the paper, by proving that our proposed online algorithm achieves near-optimal performance for all policies within the class $\Pi^{CC}$:

\begin{theorem}\label{th:optimality_infinite}
	Let $\overline{r}^{\pi}_k$ the mean time-average delivery rate for user $k$  achieved by the proposed policy. Then 
	\begin{align}\nonumber
	\sum_{k=1}^Kg_k(\overline{r}^{\pi}_k) &\geq \max_{\overline{\rv} \in\Lambda^{CC}}\sum_{k=1}^Kg_k(\overline{r}_k) - \frac{B}{V} \\ \nonumber
	\limsup\limits_{T\rightarrow \infty}\frac{1}{T}\sum_{t=0}^{T-1}\mathbb{E}\left\{\hat{Q}(t)\right\}  &\leq \frac{B + V\sum_{k=1}^Kg_k(\gamma_{max,k})}{\epsilon_0},
	\end{align}
	where $\hat{Q}(t)$ is the sum of all queue lengths at the beginning of time slot $t$, thus a measure of the mean delay of file delivery. The quantities $B$ an $\epsilon_0$ are constants that depend on the statistics of the system and are given in equations \eqref{eq:Beta} and \eqref{eq:epsilon1}-\eqref{eq:epsilon2} of the Appendix, respectively. 
\end{theorem}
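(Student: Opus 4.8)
The plan is to run a drift-plus-penalty (Lyapunov) analysis in the style of Neely. I would first stack all queues into the state $\Theta(t) = \big(U_k(t), S_k(t), Q_{\Ic}(t)\big)$ and introduce the quadratic Lyapunov function
\[
L(\Theta(t)) = \tfrac12\sum_{k} U_k(t)^2 + \tfrac12\sum_{k} S_k(t)^2 + \tfrac{1}{2F^2}\sum_{\Ic} Q_{\Ic}(t)^2 ,
\]
where the $1/F^2$ weighting on the codeword queues reconciles the bit/file units and is exactly what makes the threshold in the routing rule \eqref{eq:sig} appear. Applying the elementary inequality $([q-b]^+ + a)^2 \le q^2 + a^2 + b^2 + 2q(a-b)$ to each of the three recursions \eqref{eq:utilityQueues}, \eqref{eq:userQueues}, \eqref{eq:codewordQueues}, I would bound the conditional one-slot drift $\Delta(t) = \EE[L(\Theta(t+1)) - L(\Theta(t)) \mid \Theta(t)]$ as $\Delta(t) \le B + (\text{cross terms linear in the decisions})$, where $B$ collects the squared arrival/service terms and is finite because $a_k,\gamma_k \le \gamma_{k,\max}$, $\sigma_{\Jc}\le\sigma_{\max}$, and the rates $\mu_{\Ic}$ lie in the bounded region $\Gamma(\hv)$.

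Next I would subtract the penalty $V\sum_k \EE[g_k(\gamma_k(t)) \mid \Theta(t)]$ and regroup the cross terms by decision variable. The crucial point is that the bound separates additively into four independent subproblems and \emph{each} rule of Algorithm \ref{alg1} is the exact minimizer of its own subproblem: the virtual-arrival rule \eqref{eq:virtualArrivalsOpt} maximizes $Vg_k(x)-U_k(t)x$; the on-off admission rule \eqref{eq:adm} minimizes $(S_k(t)-U_k(t))a_k(t)$; the routing rule \eqref{eq:sig} minimizes $\big(\tfrac{1}{F^2}\sum_{\Ic\subseteq\Jc} b_{\Jc,\Ic}Q_{\Ic}(t) - \sum_{k\in\Jc}S_k(t)\big)\sigma_{\Jc}(t)$; and the scheduling rule \eqref{eq:WSRstability} minimizes $-\sum_{\Jc} Q_{\Jc}(t)\mu_{\Jc}(t)$ over $\Gamma(\hv(t))$. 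Hence the proposed policy minimizes the drift-plus-penalty upper bound over \emph{all} feasible decisions at every slot, and in particular does at least as well as any stationary randomized policy.

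To conclude I would invoke the stationary randomized policies whose existence is guaranteed by Theorem \ref{th:feasibilityRegion} and the static-policy results established in the Appendix. For the utility bound, for any target $\overline{\rv}\in\Lambda^{CC}$ there is a stationary policy with $\EE[\gamma_k]=\overline{r}_k$ and nonpositive drift on the true queues; substituting it into the minimized bound, telescoping $\Delta(t)$ over $t=0,\dots,T-1$, dropping $L(\Theta(T))\ge 0$, dividing by $T$ and letting $T\to\infty$ yields $\tfrac1T\sum_t\sum_k \EE[g_k(\gamma_k(t))] \ge \sum_k g_k(\overline{r}_k) - B/V$. Jensen's inequality with concavity of $g_k$, together with the chain $\overline{\gamma}_k \le \overline{a}_k \le \overline{r}^{\pi}_k$, then gives the first claim. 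For the delay bound I would instead compare against an interior point with slack $\epsilon_0>0$, so the comparison policy drifts each queue by at least $-\epsilon_0$; upper-bounding the utility term by $V\sum_k g_k(\gamma_{max,k})$ gives $\Delta(t) \le B + V\sum_k g_k(\gamma_{max,k}) - \epsilon_0\,\hat{Q}(t)$, and the standard telescoping argument yields the stated time-average queue-length bound.

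The hardest part is the final step: transferring optimality from the \emph{virtual} arrival rates $\gamma_k$ to the \emph{actual} delivery rates $\overline{r}^{\pi}_k$. This requires the full chain of implications — stability of $U_k$ forces the admission rate to dominate $\overline{\gamma}_k$, stability of $S_k$ forces every admitted file to be combined into some codeword, and stability of $Q_{\Ic}$ forces every codeword to be transmitted and decoded — so that Corollary \ref{cor:equivalentOptimization} applies. Establishing that all three queue families are simultaneously stable under the single drift bound, and verifying that the comparison stationary policy with the required slack genuinely lies in $\Lambda^{CC}$, is where the argument is most delicate.
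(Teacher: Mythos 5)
Your proposal is correct and follows essentially the same route as the paper's own proof: the same quadratic Lyapunov function with the $1/F^2$ weighting on codeword queues, the same observation that each rule of Algorithm~\ref{alg1} minimizes its own additive piece of the drift-plus-penalty bound, and the same two-fold comparison against static randomized policies (an interior one with slack for stability and the queue bound, the optimal one plus telescoping and Jensen for near-optimality, with stability of $U_k$, $S_k$, $Q_{\Ic}$ transferring the guarantee from $\overline{\gamma}_k$ to $\overline{r}^{\pi}_k$ via Lemma~\ref{lem:equivalence} and Corollary~\ref{cor:equivalentOptimization}). No substantive gap.
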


The above theorem states that, by tuning the constant $V$, the utility resulting from our online policy can be arbitrarily close to the optimal one, where there is a tradeoff between the guaranteed optimality gap $\bigO{1/V}$ and the upper bound on the total buffer length $\bigO{V}$. We note that these tradeoffs are in direct analogue to the converge error vs step size of the subgradient method in convex optimization.

\begin{proof}[Sketch of proof]
	For proving the Theorem, we use the Lyapunov function 
	\[
	L(t) = \frac{1}{2}\left(\sum_{k=1}^KU_k^2(t) + S_k^2(t) + \sum_{\Ic \in 2^{\Kc}}\frac{1}{F^2}Q_{\Ic}^2(t)\right)
	\]
	and specifically the related drift-plus-penalty quantity, defined as: 
	\[\mathbb{E}\left\{L(t+1) - L(t)| \mathbf{S}(t), \mathbf{Q}(t), \mathbf{U}(t)\right\} - V\mathbb{E}\left\{\sum_{k=1}^Kg(\gamma_k(t))|\mathbf{S}(t), \mathbf{Q}(t), \mathbf{U}(t) \right\}\]. The proposed algorithm is such that it minimizes (a bound on) this quantity. The main idea is to use this fact in order to compare the evolution of the drift-plus-penalty under our policy and two "static" policies, that is policies that take random actions (admissions, demand combinations and wireless transmissions), drawn from a specific distribution, based only on the channel realizations (and knowledge of the channel statistics). We can prove from Theorem 4 that these policies can attain every feasible delivery rate. The first static policy is one such that it achieves the stability of the system for  an arrival rate vector $\av'$ such that $\av'+\deltav \in \partial\Lambda^{CC}$. Comparing with our policy, we deduce strong stability of all queues and the bounds on the queue lengths by using a Foster-Lyapunov type of criterion. In order to prove near-optimality, we consider a static policy that admits file requests at rates $\av^* = \arg\max_{\av}\sum_kg_k(a_k)$ and keeps the queues stable in a weaker sense (since the arrival rate is now in the boundary $\Lambda^{CC}$). By comparing the drift-plus-penalty quantities and using telescopic sums and Jensen's inequality on the time average utilities, we obtain the near-optimality of our proposed policy.  
	
	The full proof, as well as the expressions for the constants $B$ and $\epsilon_0$, are in Section \ref{appendix:performance_proof} of the Appendix (equations \eqref{eq:Beta} and \eqref{eq:epsilon1} - \eqref{eq:epsilon2}, respectively).  
\end{proof}

\section{Dynamic File Requests}\label{sec:dynamic}
In this Section, we extend our algorithm to the case where there is no infinite amount of demands for each user, rather each user requests a finite number of files at slot $t$. Let $A_k(t)$ be the number of files requested by user $k$ at the beginning of slot $t$. We assume it is an i.i.d. random process with mean $\lambda_k$ and such that $A_k(t) \leq A_{\max}$ almost surely. \footnote{The assumptions can be relaxed to arrivals being ergodic Markov chains with finite second moment under the stationary distribution} In this case, the alpha fair delivery problem is to find a delivery rate $\overline{\mathbf{r}}$ that solves  
\begin{align} \nonumber 
\text{Maximize }& \sum_{k=1}^Kg_k(\overline{r}_k) \\ \nonumber
\text{s.t. }& \overline{\mathbf{r}} \in \Lambda^{CC} \\  \nonumber
& \overline{r}_k \leq \lambda_k ,\forall k\in\{1,...,K\}
,\end{align}
where the additional constraints $\overline{r}_k \leq \lambda_k$ denote that a user cannot receive more files than the ones actually requested. 

The fact that file demands are not infinite and come as a stochastic process is dealt with by introducing one "reservoir queue" per user, $L_k(t)$, which stores the file demands that have not been admitted, and an additional control decision on how many demands to reject permanently from the system, $d_k(t)$. At slot $t$, no more demands then the ones that arrived at the beginning of this slot and the ones waiting in the reservoir queues can be admitted, therefore the admission control must have the additional constraint  
\[
a_k(t) \leq A_k(t) + L_k(t), \forall k, t
,\]
and a similar restriction holds for the number of rejected files from the system, $d_k(t)$. The reservoir queues then evolve as 
\[
L_k(t+1) = L_k(t) + A_k(t) - a_k(t) - d_k(t).
\]
The above modification with the reservoir queues has only an impact that further constrains the admission control of files to the system. The queuing system remains the same as described in Section \ref{sec:proposed}, with the user queues $\Sm(t)$, the codeword queues $\Qm(t)$ and the virtual queues $\Um(t)$. Similar to the case with infinite demands we can restrict ourselves to policies that are functions only of the system state at time slot $t$, $\{\Sm(t), \Qm(t), \Lm(t), \Am(t), \hv(t), \Um(t)\}$  without loss of optimality. Furthermore, we can show that the alpha fair optimization problem  equivalent to the problem of controlling the admission rate. That is, we want to find a policy $\pi$ such that 
\begin{align} \nonumber
&\overline{\mathbf{a}}^{\pi} = \arg\max\sum_{k=1}^Kg_k(\overline{a}_k) \\ \nonumber
\text{s.t. } & \text{the queues $\left(\Sm(t),\Qm(t), \Um(t)\right)$ are strongly stable}\\ \nonumber
& a_k(t) \leq \min[a_{max,k},  L_k(t) + A_k(t)], \forall t\geq 0, \forall k
\end{align}  

The rules for scheduling, codeword generation, virtual queue arrivals and queue updating remain the same as in the case of infinite demands in subsections  C and  D of Sec. \ref{sec:proposed}.  The only difference is that there are multiple possibilities for the admission control; see \cite{Li05} and Chapter 5 of \cite{Georgiadis06} for more details. Here we propose that at each slot $t$, any demand that is not admitted get rejected (i.e. the reservoir queues hold no demands), the admission rule is
\begin{equation}
a^{\pi}_k(t) = A_k(t) \ind{U_k(t) \geq S_k(t)},
\end{equation} 
and the constants are set as $\gamma_{k,\max}, \sigma_{\max} \geq A_{\max}$.  Using the same ideas employed in the performance analysis of the case with infinite demands and the ones employed in \cite{Li05}, we can prove that the $O(1/V) - O(V)$ utility-queue length tradeoff  of Theorem \ref{th:optimality_infinite} holds for the case of dynamic arrivals as well.

\section{Numerical Examples}

In this section, we compare our proposed delivery scheme with two other schemes described below, all building on the decentralized cache placement described in \eqref{eq:Zk} and \eqref{eq:LLN}. 

\begin{itemize}
	\item  \textbf{Our proposed scheme}: We apply Algorithm 1 for $t=10^5$ slots. Using the scheduler \eqref{eq:WSRstability}, we calculate $\mu_{\Jc}(\tau) $ denoting the rate allocated to a user set $\Jc$ at slot $\tau\leq t$. As defined in \eqref{eq:feasiblerate}, the long-term average rate of user $k$ measured in file/slot is given by
	\begin{align}\nonumber
	\overline{r}_k= \frac{T_{\rm slot}\lim_{t\rightarrow\infty} \frac{1}{t}\sum_{\tau=1}^t \sum_{\Jc: k\in \Jc} \mu_{\Jc}(\tau) }{(1-m)F}. 
	\end{align}
	Notice that the numerator corresponds to the average number of useful bits received over a slot by user $k$ and
	the denominator $(1-m)F$ corresponds to the number of bits necessary to recover one file.

	\item \textbf{Unicast opportunistic
		scheduling}: For any request, the server sends the remaining $(1-m)F$ bits to the corresponding user without combining any files. Here we only exploit the local caching gain. In each slot the transmitter sends with full power to the following user
	\[ k^{*}(t)=\arg\max_{k}\frac{\log\left( 1+h_k(t)P\right) }{ T_k(t) ^{\alpha}},
	\]
	where $T_k(t)=\frac{\sum_{1\leq\tau\leq t-1} \mu_{k}(\tau)}{(t-1)}$ is the empirical average rate for user $k$ up to slot $t$. The resulting long-term average rate of user $k$ measured in file/slot is given by 
	\[\overline{r}_k= \frac{T_{\rm slot} \lim_{t\rightarrow\infty} \frac{1}{t}\sum_{\tau=1}^t \log(1+P h_k (\tau)) \onev\{k =k^*(\tau)\}}{(1-m)F}.\]
	
	\item \textbf{Standard coded caching}: 
	We use decentralized coded caching among all $K$ users. For the delivery, non-opportunistic TDMA transmission is used. The server sends a sequence of codewords $\{V_{\Jc}\}$ at the worst transmission rate. The number of packets to be multicast in order to satisfy one demand for each user is given by \cite{maddah2013decentralized}

	\begin{align}\nonumber 
	T_{\rm tot}(K,m) = \frac{1}{m} \left(1-m\right) \left\{1-\left(1-m\right) ^K \right \}. 
	\end{align}
	Thus the average delivery rate (in file per slot) is symmetric, and given as the following
	\begin{align}\nonumber 
	\overline{r}_k= \frac{T_{\rm slot}}{T_{\rm tot}(K,m)F}\EE\left[\log(1+P\min_{i\in\{1,\dots,K\}}h_i)\right].  
	\end{align}

\end{itemize}


We consider a system with  normalized memory of $m=0.6$, power constraint $P=10dB$, file size $F=10^3$ bits and number of channel uses per slot $T_{\rm slot}=10^2$. The channel coefficient $h_k(t)$ follows an exponential distribution with mean $\beta_k$.

We compare the three algorithms for the cases where the objective of the system is sum rate maximization ($\alpha=0$) and proportional fairness ($\alpha=1$) in two different scenarios. The results depicted in Fig.~\ref{fig:snr} consider a deterministic channel with two classes of users of $K/2$ each: strong users with $\beta_k=1$ and weak users with $\beta_k=0.2$. For Fig.~\ref{fig:fading}, we consider a symmetric block fading channel with $\beta_k=1$ for all users. Finally, for Fig.~\ref{fig:asymmetricFading}, we consider a system with block fading channel and two classes of users: $K/2$ strong users with $\beta_k=1$ and $K/2$ weak users with $\beta_k=0.2$.

\begin{figure*}
	\centering
	\subfloat[Rate ($\alpha=0$) vs $K$.]{
		\includegraphics[width=.4\linewidth]{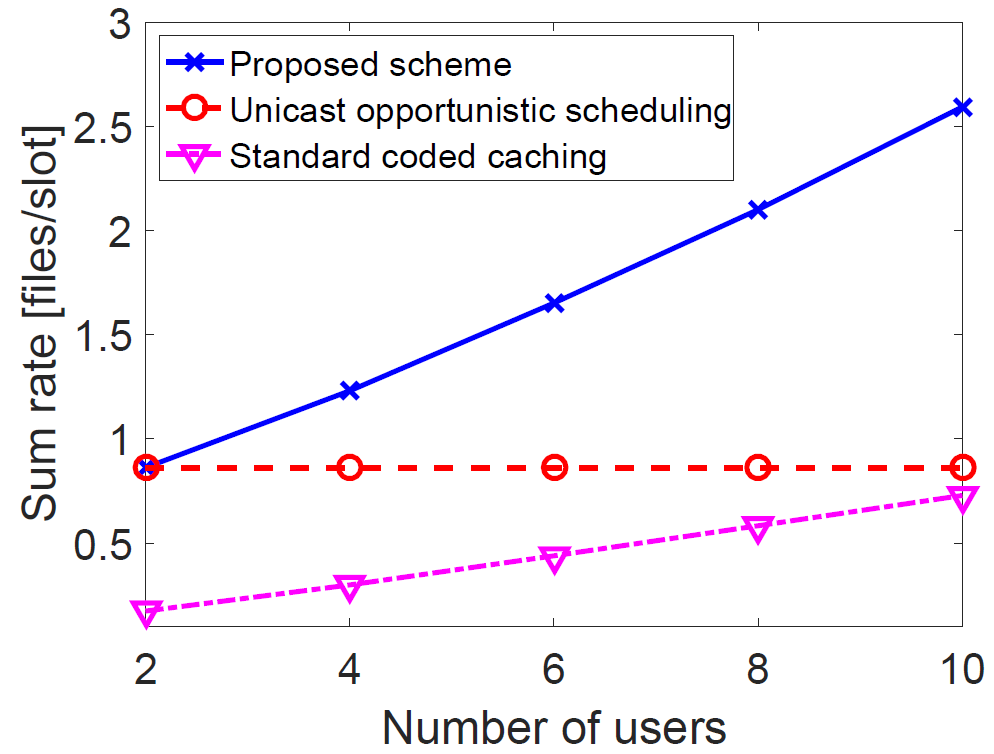}}
	\hfil
	\subfloat[Proportional fair utility ($\alpha=1$) vs $K$.]{
		\includegraphics[width=.4\linewidth]{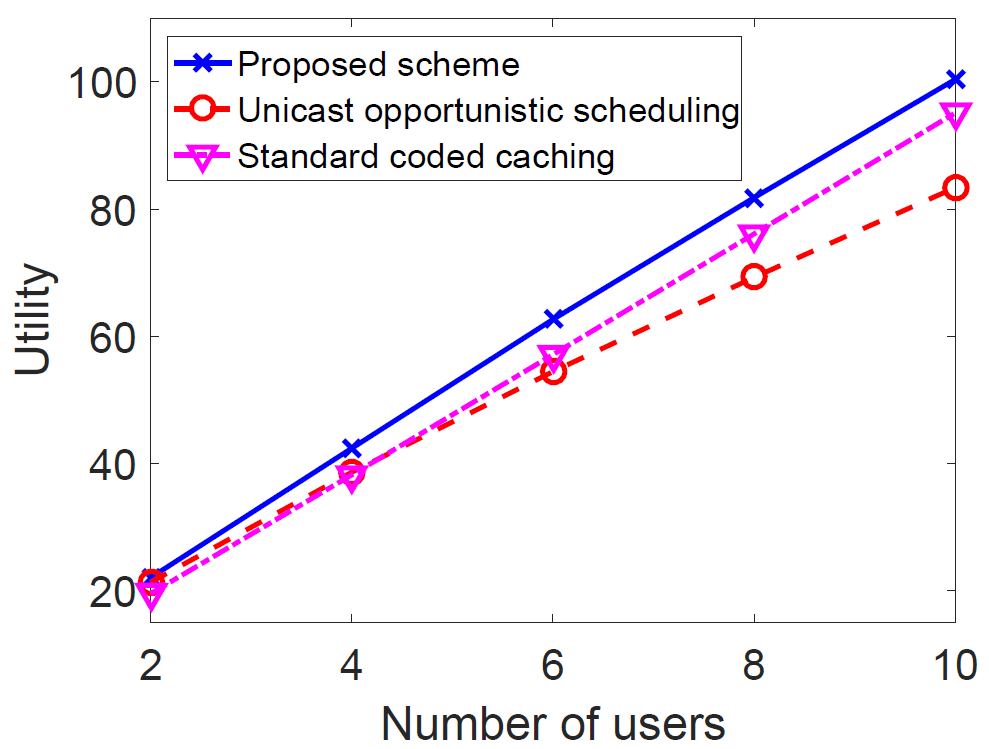}}
	\caption{Deterministic channel with different SNR.}
	\label{fig:snr}
\end{figure*}

\begin{figure*}
	\centering
	\subfloat[Rate ($\alpha=0$) vs $K$.]{
		\includegraphics[width=.4\linewidth]{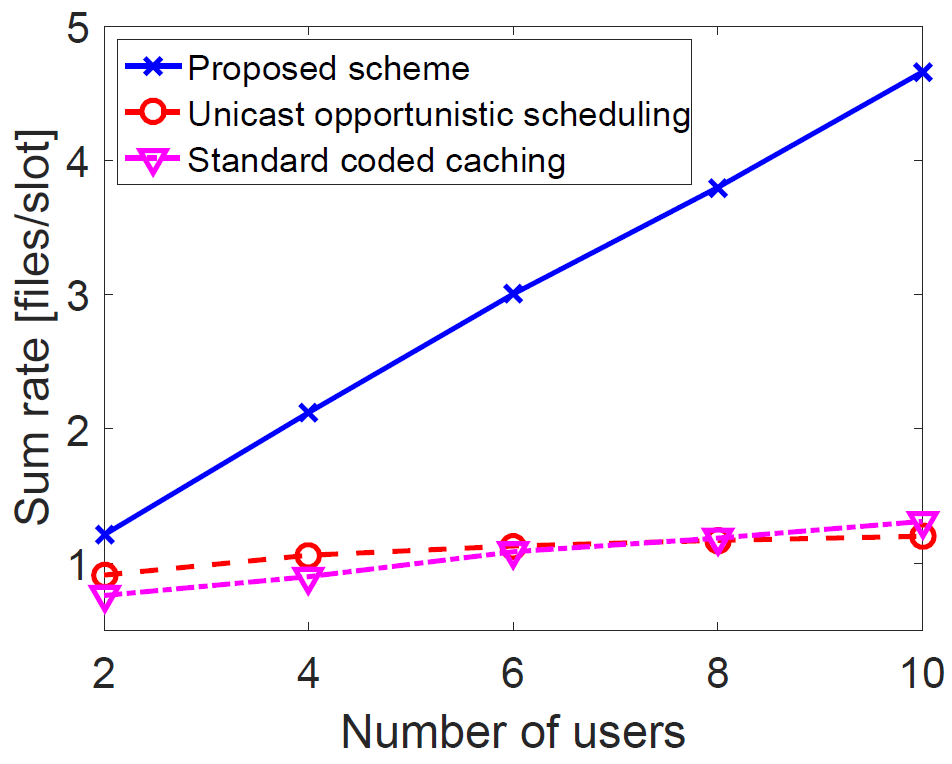}}
	\hfil
	\subfloat[Proportional fair utility ($\alpha=1$) vs $K$.]{
		\includegraphics[width=.4\linewidth]{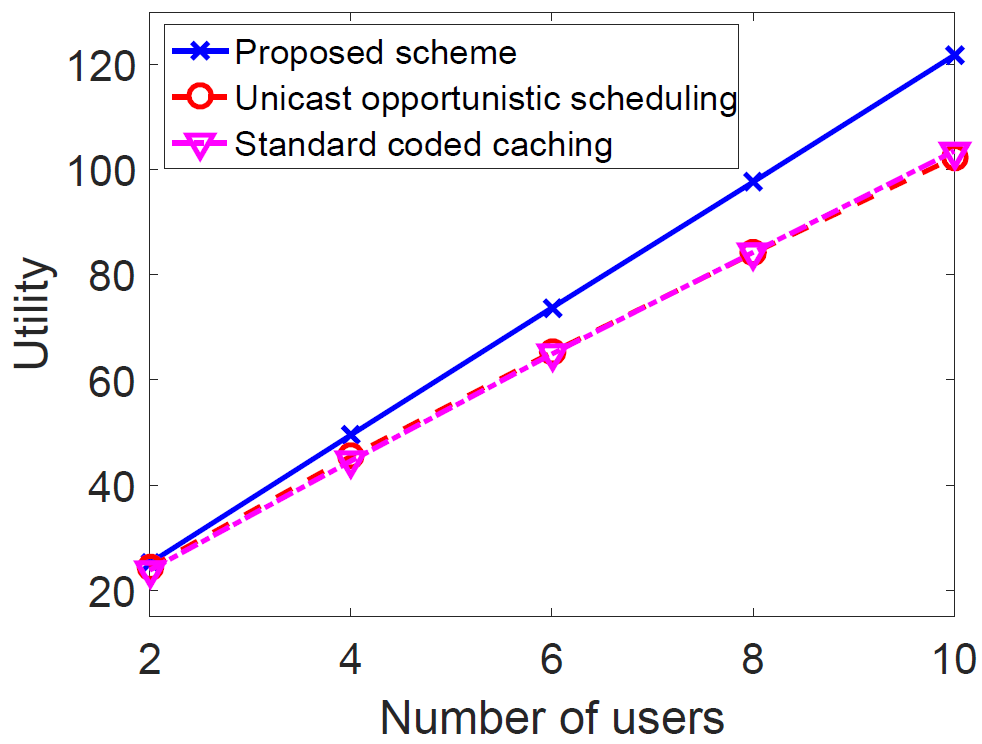}}
	\caption{Symmetric fading channel.}
	\label{fig:fading}
\end{figure*}

\begin{figure*}
	\centering
	\subfloat[Rate ($\alpha=0$) vs $K$.]{
		\includegraphics[width=.4\linewidth]{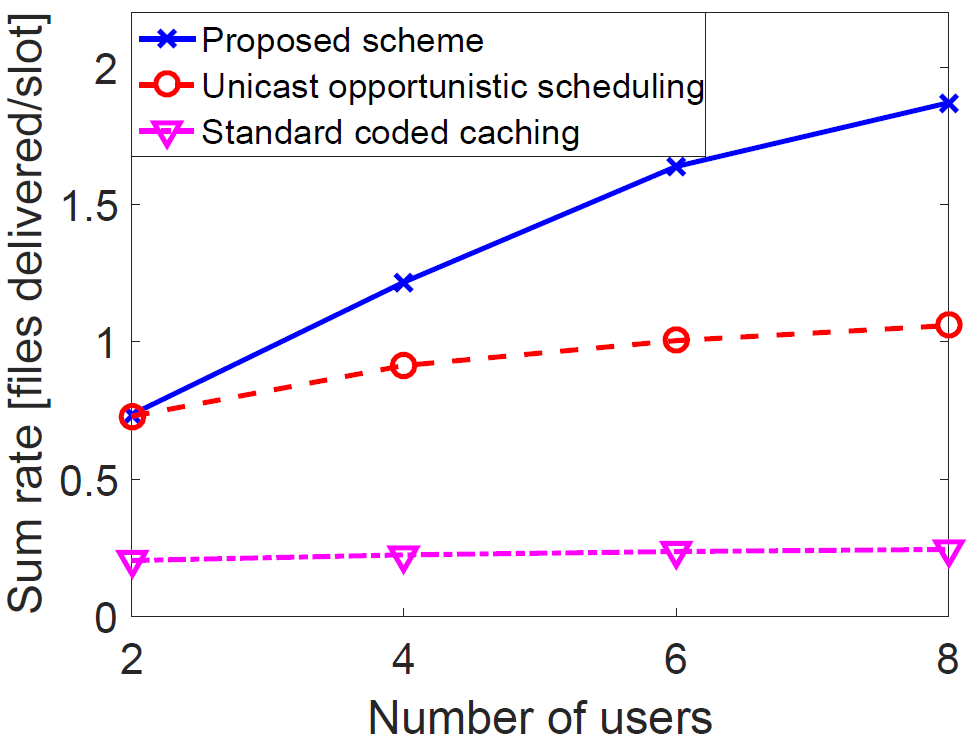}}
	\hfil
	\subfloat[Proportional fair utility ($\alpha=1$) vs $K$.]{
		\includegraphics[width=.4\linewidth]{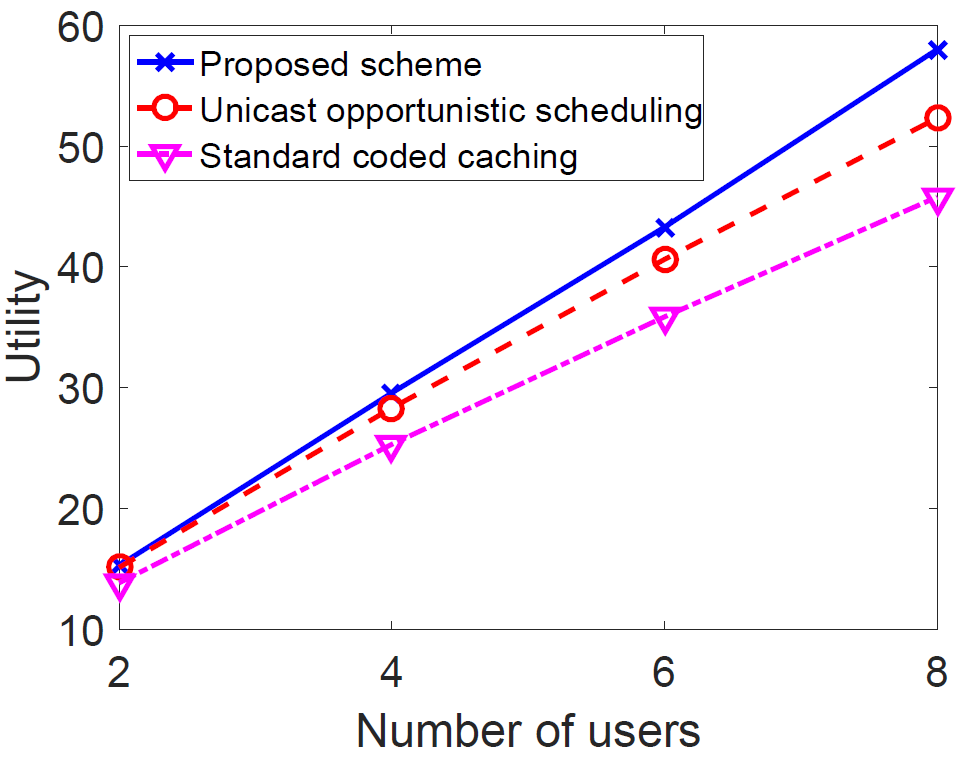}}
	\caption{Fading channels with two groups of users, each with different average SNR.}
	\label{fig:asymmetricFading}
\end{figure*}

It is notable that our proposed scheme outperforms the unicast opportunistic scheme, which maximizes the sum rate if only private information packets are to be conveyed, and standard coded caching which transmit multicast packets with the worst user channel quality. 

In Fig. \ref{fig:snr} for the deterministic channel scenario with $\alpha=0$, the unicast opportunistic scheme serves only the $K/2$ strong users in TDMA at a constant rate equal to $\frac{T_{\rm slot}}{F}\frac{\log(1+P)}{1-m}=0.865$ in file/slot. For the standard coded caching, the sum rate increases linearly with the number of users. This is because the multicast rate is constant over the deterministic channel and the behavior of $T_{\rm tot}(K,m)$ is almost constant with $m=0.6$, which makes the per-user rate almost constant. 

For the symmetric fading channel in Fig \ref{fig:fading}, the performance of unicast opportunistic and that of standard coded caching schemes are limited due to the lack of global caching gain and vanishing multicast rate, respectively.

Finally, for the case of both fading and differences in the mean SNRs, we can see from Fig.~\ref{fig:asymmetricFading} that, again, our proposed scheme outperforms the unicast opportunistic scheduling and standard coded caching both in terms of sum rate and in terms of proportional fair utility. 

In all scenarios, the relative merit of our scheme increases as the number of users grows. This can be attributed to the fact that our scheme can exploit any available multicast opportunities. \emph{Our result here implies that, in realistic wireless systems, coded caching can indeed provide a significant throughput increase when an appropriate joint design of routing and opportunistic transmission is used.} Regarding the proportional fair objective, we can see that the average sum utility increases with a system dimension for three schemes although our proposed scheme provides a  gain compared to the two others.


\section{Conclusions}\label{sec:conclusions}

In this paper, we studied coded caching over wireless fading channels in order to address its limitation governed by the user with the worst fading state. By formulating an alpha-fair optimization problem with respect to the long-term average delivery rates, we proposed a novel queueing structure that allowed us to obtain an optimal algorithm for joint file admission control, codeword construction and wireless transmissions. The main conclusion is that, by appropriately combining the multicast opportunities and the opportunism due to channel fading, coded caching can lead to significant gains in wireless systems with fading. Low-complexity algorithms which retain the benefits of our approach as well as a delay-constrained delivery scheme, are left as interesting topics of future investigation.  


\section{Appendix: Proofs}

\subsection{Proof of Theorem 1}\label{appendix:superp}
Let $M_{\Jc}$ be the message for all the users in $\Jc\subseteq [K]$ and of size $2^{nR_\Jc}$. 
We first show the converse. 
It follows that the set of $2^K - 1$ independent messages $\{
M_\Jc:\ \Jc\subseteq [K],\, \Jc\ne\emptyset \}$ can be partitioned as 
\begin{align}
\bigcup_{k=1}^K \{ M_\Jc:\ k\in\Jc\subseteq[k]\}. 
\end{align}%
We can now define $K$ independent mega-messages $\tilde{M}_k := \{ M_\Jc:\ k\in\Jc\subseteq[k]\}$ with rate $\tilde{R}_k:=\sum_{\Jc:\,k\in\Jc\subseteq[k]}R_\Jc$.
Note that each mega-message~$k$ must be decoded at least by user~$k$ reliably. Thus, the $K$-tuple $(\tilde{R}_1,\ldots,\tilde{R}_K)$ must lie inside the
private-message capacity region of the $K$-user BC. Since it is a degraded BC, the capacity region is known~\cite{el2011network}, and we have
\begin{align}
\tilde{R}_k &\le \log \frac{1+h_k \sum_{j=1}^k p_j}{1+h_k \sum_{j=1}^{k-1} p_j}, \quad k=2,\ldots,K, \label{eq:tmp99}
\end{align}%
for some $p_j\ge0$ such that $\sum_{j=1}^K p_j\le P$. This establishes the converse.  

To show the achievability, it is enough to use rate-splitting. Specifically, the transmitter first assembles the original messages into
$K$ mega-messages, and then applied the standard $K$-level superposition coding~\cite{el2011network} putting the~$(k-1)$-th signal on top of the~$k$-th signal. The
$k$-th signal has average power $p_k$, $k\in[K]$. At the receivers' side, if the rate of the mega-messages are inside the private-message
capacity region of the $K$-user BC, i.e., the $K$-tuple  $(\tilde{R}_1,\ldots,\tilde{R}_K)$ satisfies \eqref{eq:tmp99}, then each user $k$ can decode the
mega-message~$k$. Since the channel is degraded, the users~1 to $k-1$ can also decode the mega-message~$k$ and extract its own message.
Specifically, each user~$j$ can obtain $M_{\Jc}$ (if $\Jc\ni j$), from the mega-message~$\tilde{M}_k$ when $k\in\Jc\subseteq [k]$. This completes the achievability proof. 

\subsection{Static policies}\label{ssec:StaticPolicies}

An important concept for characterizing the feasibility region and proving optimality of our proposed policy is the one we will refer to here as "static policies". The concept is that decisions taken according to these policies depend only on the channel state realization (i.e. the uncontrollable part of the system) as per the following definition:  

\begin{definition}[Static Policy]
	Any policy that selects the control variables $\{\av(t), \sigmav(t), \muv(t)\}$ according to a probability distribution that depends only on the channel state $\hv(t)$ will be called a static policy. 
\end{definition} 

It is clear from the definition that all static policies belong to the set of admissible policies for our setting. An important case is where actually admission control $\av(t)$ and codeword routing $\sigmav(t)$ are decided at random and independently of everything and transmissions $\muv(t)$ are decided at by a distribution that depends only on the channel state realization of the slot: It can be shown using standard arguments in stochastic network optimization (see for example \cite{stolyar, Li05, Georgiadis06, neely10}) that the optimal long term file delivery vector and any file delivery vector in the stability region of the queueing system can be achieved by such static policies, as formalized by the following Lemmas:  

\begin{lemma}[Static Optimal Policy]\label{lem:StaticOptimalPolicy} Define a policy $\pi^*\in\Pi^{CC}$ that in each slot where the channel states are $\mathbf{h}$ works as follows:  (i) it pulls random user demands with mean $\bar{a}_k^*$, and it gives the virtual queues arrivals with mean $\overline{\gamma}_k = \overline{a}_k^*$ as well (ii) the number of combinations for subset $\Jc$ is a random variable with mean $\overline{\sigma}^*_{\Jc}$ and uniformly bounded by $\sigma_{\max}$, (iii) selects one out of $K+1$ suitably defined rate vectors $\mathbf{\muv^l}\in \Gamma(\hv), l=1,..,K+1$ with probability $\psi_{l, \hv}$. The parameters above are selected such that they solve the following problem: 
	\begin{align}\nonumber
	\max_{\overline{\av}}&\sum_{k=1}^Kg_k(\overline{a}_k^*)\\ \nonumber
	\text{s.t.}\quad & \sum_{\Jc: k\in \Jc}\overline{\sigma}^*_{\Jc} \geq \overline{a}_k^*, \forall k\in\{1,..,K\}\\ \nonumber
	& \sum_{\Jc: \Ic \subseteq \Jc}b_{\Jc, \Ic}\overline{\sigma}^*_{\Jc} \leq T_{\rm slot} \sum_{\hv}\phi_{\hv}\sum_{l=1}^{K+1}\psi_{l, \hv}\mu^l_{\Ic}(\hv), \\ \nonumber 
	& \qquad \qquad \qquad \qquad \qquad \qquad \qquad \forall \Ic \subseteq \{1,2,...,K\}
	\end{align}
	Then, $\pi^*$ results in the optimal delivery rate vector (when all possible policies are restricted to set $\Pi^{CC}$). 	
\end{lemma}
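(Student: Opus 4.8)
The plan is to recognize the optimization in the statement as a reformulation of the fair delivery problem over the stability region, and then to use convexity together with a Carath\'eodory argument to produce the finitely supported static policy $\pi^*$. First I would invoke the chain of equivalences already established: by Lemma~\ref{lem:equivalence} the region of feasible delivery rates for policies in $\Pi^{CC}$ is exactly $\mathrm{Int}(\Gamma^{CC})$, and by Corollary~\ref{cor:equivalentOptimization} maximizing $\sum_k g_k(\overline{r}_k)$ over this region is equivalent to maximizing $\sum_k g_k(\overline{a}_k)$ over the admission-rate vectors that keep the system stable. Theorem~\ref{th:feasibilityRegion} characterizes the latter set by the existence of $\overline{\muv}\in\sum_{\hv}\phi_{\hv}\Gamma(\hv)$ and $\overline{\sigma}_{\Jc}\in[0,\sigma_{\max}]$ for all $\Jc$ obeying the flow-balance relations \eqref{eq:all_packets_get_combined} and \eqref{eq:all_codewords_get_transmitted}. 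Writing $\overline{\mu}_{\Ic}=\sum_{\hv}\phi_{\hv}\sum_{l}\psi_{l,\hv}\mu^l_{\Ic}(\hv)$, these are precisely the two constraints of the optimization defining $\pi^*$, so that optimization coincides with the equivalent fair problem.

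Next I would argue that nothing is lost by restricting to static policies and, moreover, that at most $K+1$ transmission rate vectors per channel state are needed. Convexity of each $\Gamma(\hv)$ gives both directions: for any distribution $\{\psi_{l,\hv}\}_l$ the per-state mean $\sum_l\psi_{l,\hv}\muv^l(\hv)$ lies in $\Gamma(\hv)$, and conversely every point of $\sum_{\hv}\phi_{\hv}\Gamma(\hv)$ is realized by such means; hence the feasible $\overline{\av}$ set of the stated optimization is exactly the projection of $\overline{\Gamma^{CC}}$ onto the admission coordinates. To bound the support size, I would note that the achievable delivery-rate region is a convex subset of $\RR^K$ whose extreme points are attained by pure static transmissions, each solving a weighted-sum-rate maximization of the form \eqref{eq:WSRstability} (a single vertex of $\Gamma(\hv)$ per state, computed via the power allocation \eqref{eq:optimalalpha}). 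By Carath\'eodory's theorem in $\RR^K$, the optimizer $\overline{\av}^*$ is a convex combination of at most $K+1$ such extreme points; time-sharing the corresponding $K+1$ pure policies yields exactly the $K+1$ rate vectors $\muv^l(\hv)$ with weights $\psi_{l,\hv}$ in the statement.

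It then remains to verify that $\pi^*$, run with the optimal parameters $(\overline{\av}^*,\{\overline{\sigma}^*_{\Jc}\},\{\psi_{l,\hv}\})$, actually delivers at rate $\overline{\av}^*$. Since admissions and combinations are drawn i.i.d. across slots with the prescribed means, transmissions are i.i.d. given the channel, and the flow relations hold, every user queue and codeword queue is fed no faster than its expected service rate; a standard static-policy argument (see \cite{Georgiadis06,neely10}) then shows each queue is mean-rate stable and the long-run delivery rate of user $k$ equals $\overline{a}^*_k$. Conversely, by Lemma~\ref{lem:equivalence} every policy in $\Pi^{CC}$ delivers at a rate in $\Lambda^{CC}$, so $\sum_k g_k(\overline{a}^*_k)$ is an upper bound on the fair utility of any admissible policy; since $\pi^*$ meets it, $\pi^*$ is optimal within $\Pi^{CC}$.

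The step I expect to be the main obstacle is the boundary behaviour in this last argument. The optimizer $\overline{\av}^*$ lies on $\partial\Gamma^{CC}$, whereas Theorem~\ref{th:feasibilityRegion} only guarantees \emph{strong} stability strictly inside $\Gamma^{CC}$. One must therefore establish attainment in the weaker mean-rate-stability sense, or approximate $\overline{\av}^*$ from inside by $\overline{\av}^*-\deltav$ and pass to the limit $\deltav\to\zerov$ using continuity of the $g_k$, while checking that the $(K+1)$-vector representation and the value of the objective are preserved in the limit. Convexity and compactness of the regions $\Gamma(\hv)$ and $\Gamma^{CC}$ are what make this limiting step harmless.
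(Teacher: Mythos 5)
Your proposal is correct and follows essentially the same route the paper takes: the paper does not spell out a proof of this lemma but explicitly defers to ``standard arguments in stochastic network optimization'' (citing \cite{stolyar, Li05, Georgiadis06, neely10}), and your argument---reducing to the equivalent admission-rate problem via Lemma~\ref{lem:equivalence} and Corollary~\ref{cor:equivalentOptimization}, using convexity of $\Gamma(\hv)$ and a Carath\'eodory/time-sharing argument for the $K+1$ rate vectors, and noting the boundary subtlety that the paper itself handles by approximating from the $\delta$-interior in Lemma~\ref{lem:StablePolicy}---is precisely that standard argument made explicit. No gap to report.
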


\begin{lemma}[Static Policy for the $\delta-$ interior of $\Gamma^{CC}$]\label{lem:StablePolicy} Define a policy $\pi^{\delta}\in\Pi^{CC}$ that in each slot where the channel states are $\mathbf{h}$ works as follows:  (i) it pulls random user demands with mean $\overline{a}_k^{\delta}$ such that $(\overline{\av}+\deltav)\in\Gamma^{CC}$, and gives the virtual queues random arrivals with mean $\overline{\gamma}_k \leq \overline{a}_k + 
	\epsilon'$ for some $\epsilon'>0$ (ii) the number of combinations for subset $\Jc$ is a random variable with mean $\overline{\sigma}^{\delta}_{\Jc}$ and uniformly bounded by $\sigma_{\max}$, (iii) selects one out of $K+1$ suitably defined rate vectors $\mathbf{\muv^l}\in \Gamma(\hv), l=1,..,K+1$ with probability $\psi^{\delta}_{l, \hv}$. The parameters above are selected such that: 
	\begin{align}\nonumber
	& \sum_{\Jc: k\in \Jc}\overline{\sigma^{\delta}}_{\Jc} \geq \epsilon + \overline{a}_k^{\delta}, \forall k\in\{1,..,K\}\\ \nonumber
	& \sum_{\Jc: \Ic \subseteq \Jc}b_{\Jc, \Ic}\overline{\sigma}^{\delta}_{\Jc} \leq \epsilon + T_{\rm slot}\sum_{\hv}\phi_{\hv}\sum_{l=1}^{K+1}\psi^{\delta}_{l, \hv}\mu^l_{\Ic}(\hv), \forall \Ic \in 2^{\Kc}
	\end{align}
	for some appropriate $\epsilon < \delta$. Then, the system under $\pi^{\delta}$ has mean incoming rates of $\overline{\av}^{\delta}$ and is strongly stable. 
\end{lemma}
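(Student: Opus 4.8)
The plan is to construct the static policy $\pi^{\delta}$ explicitly from the feasibility characterization of Theorem \ref{th:feasibilityRegion} and then certify strong stability through a Lyapunov drift argument, reusing most of the construction already set up in Lemma \ref{lem:StaticOptimalPolicy}. First, since $\overline{\av}+\deltav\in\Gamma^{CC}$, Theorem \ref{th:feasibilityRegion} guarantees a mean service profile $\overline{\muv}\in\sum_{\hv}\phi_{\hv}\Gamma(\hv)$ together with combination means $\overline{\sigma}_{\Jc}\in[0,\sigma_{\max}]$ satisfying $\sum_{\Jc:k\in\Jc}\overline{\sigma}_{\Jc}\ge \overline{a}_k+\delta$ for all $k$ and $T_{\rm slot}\overline{\mu}_{\Ic}\ge\sum_{\Jc:\Ic\subseteq\Jc}b_{\Jc,\Ic}\overline{\sigma}_{\Jc}$ for all $\Ic$. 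I would then fix some $\epsilon\in(0,\delta)$ and treat these margins as the stability slack.

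Second, I would turn the mean vector $\overline{\muv}$ into an implementable randomized transmission rule. Writing $\overline{\muv}=\sum_{\hv}\phi_{\hv}\muv(\hv)$ with $\muv(\hv)\in\Gamma(\hv)$, the key observation inherited from Theorem 2 is that the relevant description of $\Gamma(\hv)$ collapses onto the $K$ mega-message rates $\tilde R_k$, each a function of the power allocation, the rate inside a group being freely splittable among its codeword queues. Hence the conditional mean $\muv(\hv)$ lives in a body of effective dimension $K$, and Carath\'eodory's theorem lets me write $\muv(\hv)=\sum_{l=1}^{K+1}\psi^{\delta}_{l,\hv}\,\muv^l(\hv)$ as a convex combination of $K+1$ extreme rate vectors $\muv^l(\hv)\in\Gamma(\hv)$, each given by one power allocation together with a within-group split reproducing the per-subset targets on average. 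The policy $\pi^{\delta}$ is then: in a slot with channel $\hv$, transmit with $\muv^l(\hv)$ with probability $\psi^{\delta}_{l,\hv}$; independently admit $a_k(t)$ and combine $\sigma_{\Jc}(t)$ as bounded i.i.d.\ variables of means $\overline{a}^{\delta}_k$ and $\overline{\sigma}^{\delta}_{\Jc}$; and feed the virtual queues i.i.d.\ arrivals whose mean $\overline{\gamma}_k$ sits below $\overline{a}^{\delta}_k$, as required for $U_k$ stability. Because $\hv(t)$ is i.i.d.\ and every decision depends only on $\hv(t)$ and independent coins, the long-run mean arrival and service rates of each queue equal the prescribed means, and the slack $\epsilon$ makes every service rate strictly dominate the matching arrival rate.

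Third, to conclude strong stability I would use the quadratic Lyapunov function $L(t)=\tfrac12\big(\sum_k U_k^2(t)+S_k^2(t)+\sum_{\Ic\in 2^{\Kc}}\tfrac{1}{F^2}Q_{\Ic}^2(t)\big)$ and bound its one-slot conditional drift. Applying $([x-b]^++a)^2\le x^2+a^2+b^2+2x(a-b)$ to each queue recursion and taking expectations given the current queue state, the cross terms reduce to (arrival $-$ service) coefficients multiplying each queue length; by the previous step these are at most $-\epsilon$ (up to the $1/F^2$ weighting for the codeword queues), so the drift is bounded by $B'-\epsilon\,\hat Q(t)$ for a constant $B'$ absorbing the uniformly bounded second moments. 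A standard Foster--Lyapunov / telescoping argument (as in \cite{Georgiadis06,neely10}) then gives $\limsup_{T\to\infty}\tfrac1T\sum_{t=0}^{T-1}\EE[\hat Q(t)]\le B'/\epsilon<\infty$, i.e.\ strong stability of $\Sm,\Qm,\Um$, while the construction delivers mean admission rates $\overline{\av}^{\delta}$.

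The step I expect to be the main obstacle is the Carath\'eodory decomposition with only $K+1$ vectors: there are $2^K-1$ codeword queues, yet the transmit region is genuinely $K$-dimensional once reduced to mega-message rates. The delicate point is to argue that a single within-group split accompanying each of the $K+1$ power allocations can simultaneously meet all per-subset service targets $\overline{\mu}_{\Ic}$, so that \emph{every} individual codeword queue, not merely each group aggregate, is stabilized. Once this is settled, tracking the two slacks $\epsilon$ and $\epsilon'$ so that the user, codeword, and virtual queues are all strictly stable is routine.
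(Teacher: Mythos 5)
Your proof is correct, and it supplies exactly what the paper leaves out: the paper never writes a proof of Lemma~\ref{lem:StablePolicy}, but simply asserts it follows from ``standard arguments in stochastic network optimization'' with citations, and your argument --- extract the slack-$\delta$ certificates $(\overline{\sigmav},\overline{\muv})$ for $\overline{\av}^{\delta}+\deltav$ from the definition of $\Gamma^{CC}$ in Theorem~\ref{th:feasibilityRegion}, realize them by randomization depending only on $\hv(t)$, and close with a quadratic Foster--Lyapunov drift bound of the form $B'-\epsilon\,\hat Q(t)$ --- is precisely that standard argument. Two remarks. First, the step you flag as the main obstacle is not actually one: $\Gamma(\hv)$ is a convex capacity region, so the conditional mean $\muv(\hv)\in\Gamma(\hv)$ is itself directly achievable by a single power allocation together with a single within-group rate split (for fixed powers the per-group rates form a product of simplices, which is convex and contains the target); the $K+1$ vectors in the lemma statement are an artifact of the generic template for non-convex per-state action sets, and your mega-message dimension-reduction is a valid, if unnecessary, way to justify that count. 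Second, you correctly read the inequalities in the intended direction: as printed, the conditions $\overline{\gamma}_k \leq \overline{a}_k+\epsilon'$ and ``arrivals $\leq$ service $+\,\epsilon$'' for the codeword queues place the slack on the wrong side for stability (compare the $-\epsilon'\sum_k U_k(t)$ and $-\epsilon\sum_{\Jc}Q_{\Jc}(t)/F^2$ terms in \eqref{eq:infinite_drift_4}); your version, with every service rate strictly dominating the matching arrival rate, is what the downstream proof of Theorem~\ref{th:optimality_infinite} actually uses.
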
  

\subsection{Proof of Lemma \ref{lem:equivalence}}\label{appendix:equivalence_proof} 

We prove the Lemma in two parts: (i) first we prove that $\Gamma^{CC} \subseteq \Lambda^{CC}$ and (ii) then that $\left(\Gamma^{CC}\right)^c \subseteq \left(\Lambda^{CC}\right)^c$.

For the first part, we show that if $\overline{\av}\in Int(\Gamma^{CC})$ then also $\lambdav\in \Lambda^{CC}$, that is the long term file delivery rate vector observed by the users as per \eqref{eq:feasiblerate} is $\overline{\rv}=\overline{\av}$. Denote $\overline{A}_k(t)$ the number of files that have been admitted to the system for user $k$ up to slot $t$. Also, note that due to our restriction on the class of policies $\Pi^{CC}$ and our assumption about long enough blocklengths, there are no errors in decoding the files, therefore the number of files correctly decoded for user $k$ till slot $t$ is $\overline{D}_k(t)$ . From Lemma \ref{lem:StablePolicy} it follows that there exists a static policy $\pi^{RAND}$, the probabilities of which depending only on the channel state realization at each slot, for which the system is strongly stable. Since the channels are i.i.d. random with a finite state space and queues are measured in files and bits, the system now evolves as a discrete time Markov chain $(\mathbf{S}(t), \mathbf{Q}(t), \mathbf{H}(t))$, which can be checked that is aperiodic, irreducible and with a single communicating class. In that case, strong stability means that the Markov chain is ergodic with finite mean. 

Further, this means that the system reaches to the set of states where all queues are zero infinitely often. Let $T[n]$ be the number of timeslots between the $n-$th and $(n+1)-$th visit to this set (we make the convention that $T[0]$ is the time slot that this state is reached for the first time). In addition, let $\tilde{A}_k[n], \tilde{D}_k[n]$ be the number of demands that arrived and were delivered in this frame, respectively. Then, since within this frame the queues start and end empty, we have \[
\tilde{A}_k[n] = \tilde{D}_k[n], \forall n, \forall k.
\]

\noindent In addition since the Markov chain is ergodic,  
\[
\overline{a}_k = \lim\limits_{t\rightarrow\infty}\frac{{A}(t)}{t} = \lim\limits_{N\rightarrow\infty}\frac{\sum_{n=0}^N\tilde{A}_k[n]}{\sum_{n=0}^NT[n]}
\]
and 
\[
\overline{r}_k = \lim\limits_{t\rightarrow\infty}\frac{D(t)}{t} = \lim\limits_{N\rightarrow\infty}\frac{\sum_{n=0}^N\tilde{D}_k[n]}{\sum_{n=0}^NT[n]}
\]

\noindent Combining the three expressions, $\overline{\rv} = \overline{\av}$ thus the result follows. 

We now proceed to show the second part, that is given any arrival rate vector $\overline{\av}$ that is not in the stability region of the queueuing system we cannot have a long term file delivery rate vector $\overline{\rv} = \overline{\av}$. Indeed, since $\overline{\av}\notin \Gamma^{CC}$, for any possible $\overline{\sigmav}$ satisfying \eqref{eq:all_packets_get_combined}, for every $\overline{\muv}\in\sum_{\hv\in\Hc}\phi_{\hv}\Gamma(\hv)$ there will be some subset(s) of users for which the corresponding inequality \eqref{eq:all_codewords_get_transmitted} is violated. Since codeword generation decisions are assumed to be irrevocable and $\sum_{\hv\in\Hc}\phi_{\hv}\Gamma(\hv)$ is the capacity region of the wireless channel, the above implies that there is not enough wireless capacity to satisfy a long term file delivery rate vector of $\overline{\av}$. Therefore, $\overline{\av}\notin \Lambda^{CC}$, finishing the proof. \footnote{We would also need to check the boundary of $\Gamma^{CC}$. Note, however, that by similar arguments we can show that for each vector on $\partial\Gamma^{CC}$ we need to achieve a rate vector on the boundary of the capacity region of the wireless channel. Since, as mentioned in the main text, we do not consider boundaries in this work, we can discard these points.} 

\subsection{Proof ot Theorem \ref{th:optimality_infinite}}\label{appendix:performance_proof} 

We first look at static policies, which take random decisions based only on the channel realizations. We focus on two such policies: (i) one that achieves the optimal utility, as described in Lemma \ref{lem:StaticOptimalPolicy} and (ii) one that achieves (i.e. admits and stabilizes the system for that) a rate vector in the $\delta-$ interior of $\Lambda^{CC}$ (for any $\delta >0$), as described in Lemma \ref{lem:StablePolicy}. Then, we show that our proposed policy minimizes a bound on the drift of the quadratic Lyapunov function and compare with the two aforementioned policies: Comparison with the  second policy proves strong stability of the system under our proposed policy, while comparison with the first one proves almost optimality. 

From Lemma \ref{lem:equivalence} and Corollary \ref{cor:equivalentOptimization}, it suffices to prove that under the online policy the queues are strongly stable and the resulting time average admission rates maximize the desired utility function subject to minimum rate constraints. 

The proof of the performance of our proposed policy is based on applying Lyapunov optimization theory \cite{neely10} with the following as Lyapunov function (where we have defined $\mathbf{Z}(t) = (\mathbf{S}(t), \mathbf{Q}(t), \mathbf{U}(t))$ to shorten the notation)
\[
L(\Zm)= L(\mathbf{S}, \mathbf{Q}, \mathbf{U}) = \frac{1}{2}\left(\sum_{k=1}^KU_k^2(t) + S_k^2(t) + \sum_{\Ic \in 2^{\Kc}}\frac{Q_{\Ic}^2(t)}{F^2}\right).\]
We then define the drift of the aforementioned  Lyapunov function as
\[
\Delta L(\mathbf{Z}) = \mathbb{E}\left\{L(\mathbf{Z}(t+1))-L(\mathbf{Z}(t))|\mathbf{Z}(t)=\mathbf{Z}\right\},
\] where the expectation is over the channel distribution and possible randomizations of the control policy. Using the queue evolution equations \eqref{eq:userQueues}, \eqref{eq:codewordQueues}, \eqref{eq:utilityQueues}  and the fact that $([x]^+)^2\leq x^2$, we have
\begin{align}\nonumber 
\Delta L(\mathbf{Z}(t)) \leq B & + \sum_{\Ic \in 2^{\Kc}}\frac{Q_{\Ic}(t)}{F^2}\mathbb{E}\left\{\sum_{\Jc: \Ic\subseteq \Jc}b_{\Ic,\Jc}\sigma_{\Jc}(t) - T_{\rm slot}\mu_{\Ic}(t)\bigg| \mathbf{Z}(t)\right\}\\ \label{eq:drift1_infiniteDemand}
& + \sum_{k=1}^KS_k(t)\mathbb{E}\left\{a_k(t) - \sum_{\Ic: k\in\Ic}\sigma_{\Ic}(t)\bigg| \mathbf{Z}(t)\right\} 
+ \sum_{k=1}^KU_k(t)\mathbb{E}\left\{\gamma_k(t) - a_k(t)| \mathbf{Z}(t)\right\}  
,\end{align}
where
\begin{align} \label{eq:Beta}  
B = \sum_{k=1}^K\left(\gamma^2_{k,\max} + \frac{1}{2}\left(\sum_{\Ic: k\in\Ic}\sigma_{\max}\right)^2 \right)
 + \frac{1}{2F^2}\sum_{\mathcal{I}\in 2^{\Kc}}\sum_{\mathcal{J}:\mathcal{I}\subseteq \mathcal{J}}\left(\sigma_{\max}b_{\Ic,\Jc}\right)^2  + \frac{T_{\rm slot}^2}{2F^2}\sum_{\mathcal{I}\in 2^{\Kc}}\sum_{k\in\mathcal{I}}\mathbb{E}\left\{\left(\log_2(1 + Ph_k(t))\right)^2\right\}.
\end{align}
Note that $B$ is a finite constant that depends only on the parameters of the system. Adding the quantity $-V\sum_{k=1}^K\mathbb{E}\left\{g_k(\gamma_k(t))| \mathbf{Z}(t)\right\}$ to both hands of \eqref{eq:drift1_infiniteDemand} and rearranging the right hand side, we have the drift-plus-penalty expression
\begin{align}\nonumber
\Delta L(\mathbf{Z}(t)) - V\sum_{k=1}^K\mathbb{E}\left\{g_k(\gamma_k(t))| \mathbf{Z}(t)\right\} \leq B  &+\sum_{k=1}^K\mathbb{E}\left\{-Vg_k(\gamma_k(t)) + \gamma_k(t)U_k(t)| \mathbf{Z}(t)\right\}  \\ \nonumber
&+\sum_{\Jc\in 2^{\Kc}}\mathbb{E}\left\{\sigma_{\Jc}(t)| \mathbf{Z}(t)\right\} \left(\sum_{\Ic:\Ic\subseteq \Jc}\frac{Q_{\Ic}(t)}{F^2}b_{\Ic, \Jc}- \sum_{k:k\in\Jc}S_k(t)\right) \\ \nonumber 
& + \sum_{k=1}^K\left(S_k(t) -  U_k(t)\right)\mathbb{E}\left\{a_k(t)| \mathbf{Z}(t)\right\}  \\ \label{eq:drift2_inifiniteDemand}
&-\sum_{\Jc\in 2^{\Kc}}\frac{Q_{\Jc}(t)}{F^2}T_{\rm slot}\mathbb{E}\left\{\mu_{\Jc}(t)| \mathbf{Z}(t)\right\}
\end{align} 

Now observe that the proposed scheme $\pi$ minimizes the right hand side of \eqref{eq:drift2_inifiniteDemand} given any channel state $\hv(t)$ (and hence in expectation over the channel state distributions). Therefore, for every vectors $\overline{\av}\in [1,\gamma_{\max}]^K, \overline{\gammav}\in [1,\gamma_{\max}]^K, \overline{\mathbf{\sigmav}}\in Conv(\{0,..,\sigma_{\max}\}^M),  \overline{\muv}\in \sum_{\hv\in\mathcal{H}}\phi_{\hv}\Gamma(\hv)$ that denote time averages of the control variables achievable by any static (i.e. depending only on the channel state realizations) randomized policies it holds that 
\begin{align}\nonumber
\Delta L^{\pi}(\mathbf{Z}(t)) - V\sum_{k=1}^K\mathbb{E}\left\{g_k(\gamma^{\pi}_k(t))\right\}\leq 
B&- V\sum_{k=1}^Kg_k(\overline{\gamma}_k)  + \sum_{k=1}^KU_k(t)\left(\overline{\gamma}_k - \overline{a}_k\right)+\sum_{k=1}^KS_k(t)\left(\overline{a}_k -\sum_{\Jc :k\in\Jc}\overline{\sigma}_{\Jc}\right)\\ \label{eq:infinite_drift3} 
& + \sum_{\Jc}\frac{Q_{\Jc}(t)}{F^2}\left(\sum_{\Ic :\Jc\subseteq \Ic}b_{\Jc,\Ic}\overline{\sigma}_{\Ic} - T_{\rm slot}\overline{\mu}_{\Jc}\right)
\end{align}
We will use \eqref{eq:infinite_drift3} to compare our policy with the specific static policies defined in Lemmas \ref{lem:StaticOptimalPolicy}, \ref{lem:StablePolicy}.

\textbf{Proof of strong stability:}
Replacing the time averages we get from the static stabilizing policy $\pi^{\delta}$ of Lemma \ref{lem:StablePolicy} for some $\delta>0$, we get that there exist $\epsilon,\epsilon' >0$ such that (the superscript $\pi$ denotes the quantities under our proposed policy)
\[\Delta L^{\pi}(\mathbf{Z}(t)) \leq B + V\sum_{k=1}^K\mathbb{E}\left\{g_k(a^{\pi}_k(t))\right\}- V\sum_{k=1}^Kg_k(\overline{a}_k^{\delta}) -\epsilon \left(\sum_{k=1}^K S_k(t) + \sum_{\Jc\in 2^{\Kc}}\frac{Q_{\Jc}(t)}{F^2}\right)-\epsilon'\sum_{k=1}^KU_k(t).\]
Since $a_k(t)\leq \gamma_{\max,k} \forall t$, it follows that $g_k(\overline{a}_k^{\pi})<g_k(\gamma_{\max,k})$. In addition, $g_k(x)\geq0,\forall x\geq 0$ therefore 
\[
\Delta L^{\pi}(\mathbf{Z}(t)) \leq B + V\sum_{k=1}^K\mathbb{E}\left\{g_k(\gamma_{\max,k})\right\} -\epsilon \left(\sum_{k=1}^K S_k(t) + \sum_{\Jc\in 2^{\Kc}}\frac{Q_{\Jc}(t)}{F^2}\right) -\epsilon'\sum_{k=1}^KU_k(t)
\]
Using the the Foster-Lyapunov criterion, the above inequality implies that the system $\mathbf{Z}(t)(\mathbf{S}(t), \mathbf{Q}(t), \mathbf{U}(t))$ under our proposed policy $\pi$ has a unique stationary probability distribution, under which the mean queue lengths are finite \footnote{For the utility-related virtual queues, note that if $g_k'(0)<\infty$, then $U_k(t)<Vg'_k(0)+\gamma_{k,\max}$, i.e. their length is deterministically bounded.}. Moreover, 
\begin{equation}\label{eq:delayProofEq}  
\lim\sup\limits_{T\rightarrow\infty}\frac{1}{T}\sum_{t=0}^{T-1}\mathbb{E}\left\{\sum_{\Jc\in 2^{\Kc}}\frac{Q_{\Jc}(t)}{F^2}+\sum_{k=1}^K(S_k(t) + U_k(t))\right\} \leq \frac{B+ V\sum_{k=1}^Kg_k(\gamma_{\max,k})}{\epsilon}
.\end{equation}

Therefore the queues are strongly stable under  our proposed policy. In order to prove  the part of Theorem \ref{th:optimality_infinite} regarding the guaranteed bound on the average queue lengths, we first note that the above inequality holds for every $\epsilon > 0$ and define $\epsilon_0$ as 
\begin{align} \label{eq:epsilon1}
\epsilon_0 = & \argmax_{\epsilon > 0}\epsilon \\ \label{eq:epsilon2}
\text{s.t. }& \epsilon \mathbf{1} \in \Lambda^{CC}  
. \end{align}
Following the same arguments as in Section IV of \cite{Li05}, we can show that the Right Hand Side of \eqref{eq:delayProofEq} is bounded from below by 
\[
\frac{B+ V\sum_{k=1}^Kg_k(\gamma_{\max,k})}{\epsilon_0}
,\] 
therefore proving the requested bound on the long-term average queue lengths. 

We now proceed to proving the near-optimality of our proposed policy. 

\textbf{Proof of near optimal utility:} Here we compare $\pi$ with the static optimal policy $\pi^*$ from Lemma \ref{lem:StaticOptimalPolicy}. Since $\pi^*$ takes decisions irrespectively of the queue lengths, we can replace quantities $\overline{\av}, \overline{{\sigmav}},  \overline{\muv}$ on \eqref{eq:infinite_drift3} with the time averages corresponding to $\pi^*$, i.e. $\overline{\av}^*, \overline{\sigmav}^*,  \overline{\muv}^*$. From the inequalities in Lemma \ref{lem:StaticOptimalPolicy} we have  
\[
V\sum_{k=1}^K\mathbb{E}\left\{g_k(\gamma^{\pi}_k(t))\right\} \geq V \sum_{k=1}^Kg_k(\overline{a}^{*}_k) - B + \Delta L^{\pi} (\mathbf{Z}(t))
\]
Taking expectations over $\mathbf{Z}(t)$ for both sides and summing the inequalities for $t=0,1,..,T-1$ and dividing by $VT$ we get 
\[\frac{1}{T}\sum_{t=1}^{T-1}\sum_{k=1}^K\mathbb{E}\left\{g_k(\gamma^{\pi}_k(t))\right\} \geq \sum_{k=1}^Kg_k(\overline{a}^{*}_k) - \frac{B}{V} - \frac{\mathbb{E}\left\{L^{\pi}(\mathbf{Z}(0))\right\}}{VT} + \frac{\mathbb{E}\left\{L^{\pi}(\mathbf{Z}(T))\right\}}{VT}.\]
Assuming $\mathbb{E}\left\{L^{\pi}(\mathbf{Z}(0))\right\} < \infty$ (this assumption is standard in this line of work, for example it holds if the system starts empty), since $\mathbb{E}\{L^{\pi}(\mathbf{Z}(T))\}>0,\forall T>0$, taking the limit as $T$ goes to infinity gives
\[
\lim\limits_{T\rightarrow\infty}\frac{1}{T}\sum_{t=1}^{T-1}\sum_{k=1}^K\mathbb{E}\left\{g_k(\gamma^{\pi}_k(t))\right\} \geq \sum_{k=1}^Kg_k(\overline{a}^{*}_k) - \frac{B}{V}
\]
In addition, since $g_k(x)$ are concave, Jensen's inequality implies 
\[
\sum_{k=1}^Kg_k(\overline{\gamma}_k^{\pi}) =\sum_{k=1}^Kg_k\left(\lim_{T\rightarrow \infty}\frac{1}{T}\sum_{t=0}^T\mathbb{E}\{\gamma_k^{\pi}(t)\}\right)
\geq \lim\limits_{T\rightarrow\infty}\frac{1}{T}\sum_{t=1}^{T-1}\sum_{k=1}^K\mathbb{E}\left\{g_k(\gamma^{\pi}_k(t))\right\}\geq \sum_{k=1}^Kg_k(\overline{a}^{*}_k) - \frac{B}{V}
.\]
Finally, since the  virtual queues $U_k(t)$ are strongly stable, it holds $\overline{a}_k^{\pi} > \overline{\gamma}_k^{\pi}$. We then have 
\[
\sum_{k=1}^Kg_k(\overline{a}_k^{\pi}) > \sum_{k=1}^Kg_k(\overline{\gamma}_k^{\pi}) \geq \sum_{k=1}^Kg_k(\overline{a}^{*}_k) - \frac{B}{V}
,\]
which proves the near optimality of our proposed policy $\pi$. 

\bibliographystyle{IEEEtran}
\bibliography{caching}

\end{document}